\newtheorem{theorem}{Theorem}[section]
\newtheorem{lemma}{Lemma}[section]
\newtheorem{corollary}{Corollary}[section]
\newtheorem{remark}{Remark}[section]
\newtheorem{exmp}{Example}[section]
\numberwithin{equation}{section}
\numberwithin{table}{section}
\begin{document}

\begin{frontmatter}

\title{Optimal two-level choice designs for the main effects \\ and specified interaction effects model}

\author{Soumen Manna}

\address{Indian Institute of Technology Bombay, India }

\begin{abstract}
Choice designs for the main effects model, broader main effects model and main effects plus specified interaction effects model are discussed in this paper. Universally optimal choice designs are obtained for all of these models using Hadamard matrix and other combinatorial techniques. Choice experiments under the multinomial logit model for equally attractive options are assumed for finding universally optimal choice designs.	
\end{abstract}

\begin{keyword}
Optimal choice designs \sep Hadamard matrix \sep Main effects \sep Interaction effects.
\end{keyword}

\end{frontmatter}

\section{Introduction}
Discrete choice experiments are widely used for quantifying consumer preferences in various areas including marketing, transport, environmental resource economics and public welfare analysis. A choice experiment consists of a number of choice sets, each containing several options (alternatives, profiles or treatment combinations). Respondents are shown each choice set in turn and are asked which option they prefer from each of the choice sets presented. Each option in a choice set is described by the level combination of $n$ factors (attributes). We assume that there are no repeated options in a choice set and each respondent chooses the best option from each choice set as per their perceived utility. A choice design is a collection of choice sets employed in a choice experiment. Thus a choice design $d$ consisting of $N$ choice sets, each containing $m$ profiles and each profile is a level combination of
$n$ factors.

\cite{r11} present a comprehensive exposition of designs for choice experiments under the multinomial logit model. More recently, \cite{r7} present a review of choice experiments till date. The literature so far on this subject is mainly focused on optimal choice designs for the main effects model  (\cite{r1}; \cite{r5}; \cite{r2}; \cite{r3} etc.) and for the main effects plus two factor interaction effects model (\cite{r1}; \cite{r4}; \cite{r2}; \cite{r6} etc.). Though for the former case, researchers are able to find optimal designs in very less number of choice sets but in the later case, the proposed optimal designs take a large number of choice sets. For this reason, in most of the cases, the theoretical optimal designs, for estimating all the main effects and two factor interaction effects, are quite impractical to use. To overcome this situation, researchers propose near-optimal designs with relatively lesser number of choice sets by sacrificing the efficiency (\cite{r9}; \cite{r10}; \cite{r11} etc.).      

Though the advancement of this subject is really splendid in the last decade but there many areas of investigation still remain to explore. In this paper we explore some of them which have some theoretical and practical importance. For example, suppose a researcher is interested to estimate all the main effects but (s)he can not deny the presence of two factor interaction effects in the model. So, in this situation, (s)he wants an optimal design for the estimation of main effects in the presence of two factor interaction effects but in the absence of three and higher order interaction effects in the model. We refer this model as broader main effects model and obtain optimal designs under this model. In most of the choice situations, the information of all the main effects and all the two factor interactions are not so important but the information about all the main effects and two or higher order interaction effects with some specific factors are much more  important to investigate. In this paper optimal choice designs are obtained for such situations in practical number of choice sets. It gives more flexibility to the researchers to design their own choice experiment appropriately for estimating all the main effects and specified two or higher order interaction effects of their interest.

In this paper we restrict ourselves in $2^n$ choice experiments under the multinomial logit model. A general set up and characterization of information matrix is given in Section 2. Optimal choice designs for the main effects under the main effects model and the broader main effects model are obtained in Section 3. In Section 4, optimal choice designs are obtained for the main effects and specified interaction effects (with one factor or more than one factors) model. Finally, In Section 5, a general discussion is given on achieved designs.    

\section{Useful notations and information matrix}
Let $d=d_{N,n,m} = \{(T_1, \ldots, T_m)\}$ is a choice design with $N$ choice sets each of size $m$ with $n$ factors, where a typical treatment combination $T_{i}=(i_{1} \ldots i_{r} \ldots i_{n}),i_{r}=0,1; r = 1,\ldots,n$. Let $A_i$, $i = 1,\ldots, m$, be $N\times n$ matrices with entries 0 and 1. Then a choice design $d$ can also be represented in matrix notation as $d=(A_1, \ldots, A_m)$, where the $p$-th row from each $A_i$ makes the $p$-th choice set  $S_{p}$ (say) and hence $d = \{ S_{p}: p = 1,\ldots,N\}$. 

Let $f_1, \ldots,f_n$, denote the $n$ factors and let $F_{h_1\ldots h_r}$ denotes the $r$-th order interaction effect corresponding to the factors $f_{h_1},\ldots,f_{h_r}$. Clearly, when $r=1$, $F_{h_1}$ denotes the main effect of the factor $f_{h_1}$ and when $r=2$, $F_{h_1h_2}$ denotes the two factor interaction effect between the factors $f_{h_1}$ and $f_{h_2}$ and so on. We define the position of the factor $f_{h_r}$ in a treatment $T_i$ as $i_{h_r}$ and the effective position of a factorial effect  
$F_{h_1\ldots h_r}$ in a treatment $T_i$ as $i^*_{h_1\ldots h_r} = r+1 - (i_{h_1}+ \cdots + i_{h_r})$ (mod 2). Let $S_{p}(h_1\ldots h_r)=(1^*_{h_1\ldots h_r}, \ldots, m^*_{h_1\ldots h_r})_{h_1\ldots h_r}$ be the effective choice set of $S_{p}=(T_1, \ldots, T_m)$ for the factorial effect $F_{h_1\ldots h_r}$. Similarly, let $S_{p}(h_1\ldots h_r,k_1\ldots k_l)=(1^*_{h_1\ldots h_r}1^*_{k_1\ldots k_l},\ldots,m^*_{h_1\ldots h_r}m^*_{k_1\ldots k_l})_{h_1\ldots h_r,k_1\ldots k_l}$ be the effective choice set of $S_{p}=(T_1, \ldots, T_m)$ for any two factorial effects $F_{h_1\ldots h_r}$ and $F_{k_1\ldots k_l}$.

In this paper we use the standard definition of contrast vector corresponding to a factorial effect $F_{h_1 \ldots h_r}$. Let $B_{h_u}=(b^{(j)}_{h_u})$ be the orthogonal contrast vector of the factorial effect $F_{h_u}$, $h_u=1,\ldots,n$. Corresponding to a treatment $T_i$ and the factorial effect $F_{h_u}$, let $b^{(i)}_{h_u}=-1$ if $i_{h_u}=0$ and $b^{(i)}_{h_u}=1$ if $i_{h_u}=1$. Let $B_{h_1 \ldots h_r}=(b^{(j)}_{h_1 \ldots h_r})$ be the orthogonal contrast vector of the factorial effect $F_{h_1 \ldots h_r}$. Then corresponding to a treatment $T_i$, $b^{(i)}_{h_1 \ldots h_r} = b^{(i)}_{h_1}\cdots b^{(i)}_{h_r}$. It is assumed that the treatments are arranged in lexicographic order in $B_{h_1 \ldots h_r}$. 

Let $\Lambda$ be the information matrix of treatment effects corresponding to a design $d$, and let $B$ be the orthogonal treatment contrast matrix corresponding to all the factorial effects of interest. Then the information matrix of the factorial effects of interest corresponding to $d$ is $C_d = (1/2^n)B\Lambda B'$ (see, street 2007 for details). A design $d$ is connected if the corresponding information matrix $C_d$ ($=C$ say) is positive definite. A  connected design allows the estimation of all underlying factorial effects of interest. 

Let $\mathcal{D}_{N,n,m}$ be the class of all connected designs with $N$ choice sets each of size $m$ with $n$ factors. Note from (street 2007) that for a design $d \in \mathcal{D}_{N,n,m} $, the $2^n\times 2^n$ information matrix $\Lambda = ((\lambda_{st}))$ of the treatment effects for equally attractive options is 

\begin{equation*}\label{}
\lambda_{st} = \left\{ \begin{array}{ccl}
((m-1)/N m^2) \sum_{j_2 < \cdots < j_m} N_{j_1 j_2 \ldots j_m} & \text{if} & s = t = j_1 \\

(-1/N m^2) \sum_{j_3 < \cdots < j_m} N_{j_1 j_2 \ldots j_m} &  \text{if} &  s =j_1, t = j_2 \\

0 &   \text{otherwise}, &
\end{array}\right.
\end{equation*}
where $N_{j_1 \ldots j_m}$ is the indicator function taking value 1 if $(T_{j_1}, \ldots, T_{j_m}) \in d$ and 0 otherwise. Let $M^{(j_{1} \ldots j_{m})} = ((m_{st}))$ be a $2^n\times 2^n$ matrix corresponding to a choice set $(T_{j_{1}},$  $\ldots, T_{j_{m}})$, where
\vspace{-.3cm}
$$ m_{st} = \left\{ \begin{array}{ll}
m-1 & \text{if} \hspace{.5cm} s = t, \;t \in \{ j_{1}, \ldots, j_{m} \} \\
-1 & \text{if} \hspace{.5cm} s\neq t, (s,t) \in \{j_{1}, \ldots, j_{m}\} \\
0 & \text{otherwise.} 
\end{array}\right.$$
Then for any choice design $d$ in $\mathcal{D}_{N,n,m}$, $\Lambda$ can be written as 
\begin{equation*} 
\Lambda = (1/N m^2)\sum_{j_{1}< \cdots <j_{m}} N_{j_1  \ldots j_m}M^{(j_1  \ldots j_m)} = (1/N m^2) \Lambda^* \hspace{.1cm} (say).
\end{equation*}
We consider the matrix $M^{(j_{1}\ldots j_{m})}$ as the contribution of the choice set $(T_{j_{1}},$  $\ldots, T_{j_{m}})$ to $\Lambda$.
The definition of $M^{(j_{1}\ldots j_{m})}$ suggests that we can write
$M^{(j_{1}\ldots j_{m})} = \sum_{j_{r}<j_{r'}}M^{(j_{r}j_{r'})}$,
where $j_{r},j_{r'} \in \{j_{1},\ldots ,j_{m}\}$. 
Thus the contribution of the choice set $(T_{j_{1}},$  $\ldots, T_{j_{m}})$ to $\Lambda$ is the sum of the contributions of all its $m(m-1)/2$ component pairs $(T_{j_{r}}, T_{j_{r'}})$. Then the information matrix $C=((c_{qq'}))$, for the factorial effects of interest can be written as  
\begin{eqnarray}\label{Cmat} C &=& (1/2^n)B\Lambda B' = (1/2^n N m^2) B\Lambda^* B' \nonumber \\
&=& (1/2^n N m^2) \sum_{j_{1}< \cdots <j_{m}} N_{j_1 \ldots j_m}\left\{B \left(\sum_{j_{r}<j_{r'}}M^{(j_{r}j_{r'})}\right) B'\right\}  .
\end{eqnarray}

\noindent Since each choice set $S_{p}$ contains $m(m-1)/2$ component pairs $(T_i,T_j)$ and there are $N$ such choice sets in $d$, therefore the total number of component pairs in a design $d$ is $N^* = N m(m-1)/2$. If $B_{h_1 \ldots h_r}$ and $B_{k_1 \ldots k_l}$ are the $q$-th and $q'$-th contrasts $B$, then from the expression (\ref{Cmat}) we see that the values of $c_{qq'}$ are only depend on the values of $B_{h_1 \ldots h_r}M^{(ij)}B'_{k_1 \ldots k_{l}}$, for all the $N^*$ component pairs $(T_i, T_j)$ in $d$. The following result helps us to determine the values of $c_{qq'}$. 
\begin{lemma}\label{lem-Mij}
	For a component pair $(T_i, T_j)$, the exhaustive cases indicating possible values of $B_{h_1 \ldots h_r}M^{(ij)}B'_{k_1 \ldots k_{l}}$ are \vspace{-.2cm} \\
	
	\noindent Case 1:
	$B_{h_1 \ldots h_r}M^{(ij)}B'_{k_1 \ldots k_l}=4$ when \\
	$(i^*_{h_1\ldots h_r}i^*_{k_1\ldots k_l}, j^*_{h_1\ldots h_r}j^*_{k_1\ldots k_l})_{h_1\ldots h_r, k_1\ldots k_l} = (00,11)_{h_1\ldots h_r, k_1\ldots k_l}$ \vspace{-.2cm} \\
	
	\noindent Case 2: 
	$B_{h_1 \ldots h_r}M^{(ij)}B'_{k_1 \ldots k_l}=-4$ when \\
	$(i^*_{h_1\ldots h_r}i^*_{k_1\ldots k_l}, j^*_{h_1\ldots h_r}j^*_{k_1\ldots k_l})_{h_1\ldots h_r, k_1\ldots k_l} = (01,10)_{h_1\ldots h_r, k_1\ldots k_l}$ \vspace{-.2cm} \\
	
	\noindent Case 3: $B_{h_1 \ldots h_r}M^{(ij)}B'_{k_1 \ldots k_l}=0$ for all other situations. 
\end{lemma}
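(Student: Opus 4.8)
The plan is to exploit the fact that a component pair is just a choice set of size $m=2$, so that $M^{(ij)}$ collapses to a rank-one matrix. Substituting $m=2$ into the definition of $M^{(j_1 \ldots j_m)}$, the only nonzero entries are $m_{ii}=m_{jj}=1$ and $m_{ij}=m_{ji}=-1$. Writing $u_i$ for the $i$-th standard unit column vector of length $2^n$, this is exactly $M^{(ij)} = (u_i - u_j)(u_i - u_j)'$. This factorization is what makes the whole quantity tractable.

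With this in hand, I would use that $B_{h_1\ldots h_r}$ is the row vector whose $i$-th entry is $b^{(i)}_{h_1\ldots h_r}$, so that $B_{h_1\ldots h_r}(u_i - u_j) = b^{(i)}_{h_1\ldots h_r} - b^{(j)}_{h_1\ldots h_r}$ and likewise for $B_{k_1\ldots k_l}$. Hence
\[
B_{h_1\ldots h_r}M^{(ij)}B'_{k_1\ldots k_l} = \bigl(b^{(i)}_{h_1\ldots h_r} - b^{(j)}_{h_1\ldots h_r}\bigr)\bigl(b^{(i)}_{k_1\ldots k_l} - b^{(j)}_{k_1\ldots k_l}\bigr),
\]
a product of two differences of contrast entries. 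Each entry is $\pm 1$, so each difference lies in $\{-2,0,2\}$, and the product therefore lies in $\{-4,0,4\}$, which already matches the three values asserted in the statement.

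Next I would pin down the sign by linking each contrast entry to its effective position. Since $b^{(i)}_{h_1\ldots h_r} = \prod_{u=1}^r b^{(i)}_{h_u} = (-1)^{\,r-(i_{h_1}+\cdots+i_{h_r})}$ and $i^*_{h_1\ldots h_r} \equiv r+1-(i_{h_1}+\cdots+i_{h_r}) \pmod 2$, a parity check gives $b^{(i)}_{h_1\ldots h_r}=-1$ when $i^*_{h_1\ldots h_r}=0$ and $b^{(i)}_{h_1\ldots h_r}=+1$ when $i^*_{h_1\ldots h_r}=1$. Thus a difference $b^{(i)}-b^{(j)}$ vanishes precisely when the two effective positions agree, and equals $\pm 2$ when they differ, the sign being fixed by which treatment carries effective position $1$. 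The product is therefore nonzero only when both effective positions differ across $i$ and $j$ for both factorial effects, and in that case it equals $\pm 4$. Finally, using the symmetry $M^{(ij)}=M^{(ji)}$ to normalize the representative so that $(i^*_{h_1\ldots h_r},j^*_{h_1\ldots h_r})=(0,1)$, the pattern $(00,11)$ yields $(-2)(-2)=4$ (Case 1) and $(01,10)$ yields $(-2)(+2)=-4$ (Case 2); every remaining sign pattern either reduces to one of these after swapping $i\leftrightarrow j$ or contains an agreeing effective position and so gives $0$ (Case 3).

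The computation is essentially mechanical, so the only point requiring care is the parity bookkeeping in the third step—establishing the precise dictionary $b^{(i)}_{h_1\ldots h_r} \leftrightarrow i^*_{h_1\ldots h_r}$—together with checking in the last step that the four nonzero sign combinations collapse to exactly the two stated representatives under $i\leftrightarrow j$, so that no case is double counted or overlooked.
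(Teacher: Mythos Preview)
Your proof is correct and follows essentially the same route as the paper: both reduce $B_{h_1\ldots h_r}M^{(ij)}B'_{k_1\ldots k_l}$ to the product $(b^{(i)}_{h_1\ldots h_r}-b^{(j)}_{h_1\ldots h_r})(b^{(i)}_{k_1\ldots k_l}-b^{(j)}_{k_1\ldots k_l})$ using the sparse structure of $M^{(ij)}$, and then read off the three cases from the $\pm 1$ entries. The only difference is that the paper leaves the final case analysis to ``the definitions'', whereas you carry out the parity dictionary $b^{(i)}_{h_1\ldots h_r}\leftrightarrow i^*_{h_1\ldots h_r}$ and the $i\leftrightarrow j$ normalization explicitly.
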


\begin{proof}
	Let $B_{x} = (x_{1},\ldots, x_{i},\ldots, x_{j},\ldots,x_{2^n})$ and $B_{y} = (y_{1},\ldots, y_{i},\ldots, $ $y_{j},\ldots, y_{2^n})$ are the contrast vectors corresponding to $F_{h_1 \ldots h_r}$ and $F_{k_1 \ldots k_{l}}$ respectively.
	Note that $M^{(ij)}$ is a $ 2^n\times 2^n$ matrix with all elements $0$ except $ M_{ii}^{(ij)} =  M_{jj}^{(ij)} = 1$ and $ M_{ij}^{(ij)} =  M_{ji}^{(ij)} = -1$. Therefore $B_{x}M^{(ij)}B'_{y} = \left(0,\ldots,(x_{i}-x_{j}),\ldots, -(x_{i}-x_{j}),\ldots,0\right)B'_{y}$ \\
	$= (x_{i}-x_{j})y_{i} -(x_{i}-x_{j})y_j = (x_{i}-x_{j})(y_{i}-y_j)$.
	
	The results then simply follows from the definitions of $B_{h_1 \ldots h_r}$ and $B_{k_1 \ldots k_{l}}$.
\end{proof}

\begin{corollary}\label{cor-Mij}
	For a component pair $(T_i, T_j)$, $B_{h_1 \ldots h_r}M^{(ij)}B'_{h_1 \ldots h_{r}}=4$ when \\ $(i^*_{h_1\ldots h_r}, j^*_{h_1\ldots h_r})_{h_1\ldots h_r} = (0,1)_{h_1\ldots h_r}$, and 0 otherwise.
\end{corollary}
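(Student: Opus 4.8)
The plan is to obtain the corollary directly from Lemma~\ref{lem-Mij} by specializing the second factorial effect to coincide with the first. Concretely, I would set $F_{k_1 \ldots k_l} = F_{h_1 \ldots h_r}$, so that the two contrast vectors appearing in the lemma become identical, $B_{k_1\ldots k_l} = B_{h_1\ldots h_r}$, and in particular the effective positions collapse: for the component pair $(T_i, T_j)$ one has $i^*_{k_1\ldots k_l} = i^*_{h_1\ldots h_r}$ and $j^*_{k_1\ldots k_l} = j^*_{h_1\ldots h_r}$.

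With this identification in hand I would run through the three exhaustive cases of Lemma~\ref{lem-Mij}. Case~2, which requires the configuration $(i^*_{h_1\ldots h_r} i^*_{k_1\ldots k_l}, j^*_{h_1\ldots h_r} j^*_{k_1\ldots k_l}) = (01,10)$, forces $i^*_{k_1\ldots k_l} \neq i^*_{h_1\ldots h_r}$; since the two index strings now label the same effect, this is impossible, so Case~2 is vacuous and the value $-4$ never arises. Case~1 requires $(00,11)$, which under the collapse $i^* = i^*_{h_1\ldots h_r}$, $j^* = j^*_{h_1\ldots h_r}$ reduces to the single condition $(i^*_{h_1\ldots h_r}, j^*_{h_1\ldots h_r}) = (0,1)$ and yields $B_{h_1\ldots h_r}M^{(ij)}B'_{h_1\ldots h_r} = 4$. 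Every remaining configuration falls under Case~3 and contributes $0$.

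As a cross-check I would verify this against the closed form derived inside the proof of Lemma~\ref{lem-Mij}: with $B_y = B_x$ one gets $B_{h_1\ldots h_r}M^{(ij)}B'_{h_1\ldots h_r} = (x_i - x_j)^2$, which equals $4$ exactly when the two contrast entries differ and $0$ when they agree, and by the sign convention relating a contrast entry to its effective position this is precisely the configuration recorded as $(0,1)$ (read as the unordered pair, one effective position being $0$ and the other $1$). I do not expect a genuine obstacle, since the statement is an immediate specialization; the only point requiring care is the bookkeeping that renders Case~2 vacuous and that merges the two orderings of the ``positions differ'' situation into the single representative $(0,1)$.
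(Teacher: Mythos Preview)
Your proposal is correct and matches the paper's approach: the paper states the corollary without proof, treating it as an immediate specialization of Lemma~\ref{lem-Mij} with $F_{k_1\ldots k_l}=F_{h_1\ldots h_r}$, which is exactly what you do. Your additional cross-check via $(x_i-x_j)^2$ is a nice confirmation but not needed.
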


\begin{remark}\label{effectivePairs}
	From the Lemma \ref{lem-Mij} and Corollary \ref{cor-Mij}, we see that for a component pair $(T_i,T_j)$, the values of $B_{h_1 \ldots h_r}M^{(ij)}B'_{k_1 \ldots k_{l}}$ and $B_{h_1 \ldots h_r}M^{(ij)}B'_{h_1 \ldots h_{r}}$ are only depends on the effective component pairs $(T_i, T_j)_{h_1\ldots h_r, k_1\ldots k_l} = (i^*_{h_1\ldots h_r}i^*_{k_1\ldots k_l}, j^*_{h_1\ldots h_r}j^*_{k_1\ldots k_l})_{h_1\ldots h_r, k_1\ldots k_l}$ and  \\ $(T_i, T_j)_{h_1\ldots h_r} = (i^*_{h_1\ldots h_r}, j^*_{h_1\ldots h_r})_{h_1\ldots h_r}$ respectively.   
\end{remark}

Let $F$ is the set of $Q$ factorial effects $F_{h_1 \ldots h_r}$'s of interest corresponding to a choice design $d$. Then $B$ would be a $Q \times 2^n$ contrast matrix and $C$ would be a square matrix of order $Q$. Let $N_F = \{h_1 \ldots h_r : F_{h_1 \ldots h_r} \in F \}$. Now for any two $h_1 \ldots h_r, k_1 \ldots k_l \in N_F$, we define, $\eta_{h_1 \ldots h_r, k_1 \ldots k_l}^{+}$ and $\eta_{h_1 \ldots h_r, k_1 \ldots k_l}^{-}$ to be the total number of effective component pairs of the type $(00,11)_{h_1\ldots h_r, k_1\ldots k_l}$ and $(01,10)_{h_1\ldots h_r, k_1\ldots k_l}$ respectively in $d$. We also define, $n_{p}(h_1 \ldots h_r)$ to be the total number of 0's in the effective choice set $S_{p}(h_1 \ldots h_r)$ of $d$, $p=1,\ldots,N$, $h_1 \ldots h_r \in N_F$.    

We use the universal optimality criteria for finding optimal designs in $\mathcal{D}$. Following \cite{r8}, a choice design $d^*$ is universally optimal in $\mathcal{D}$, if $C_{d^*}$ is a scalar multiple of identity matrix and $ trace(C_{d^*}) \geq trace(C_{d})$, for any other design $d \in \mathcal{D}$. If a design $d$ is universally optimal in $\mathcal{D}$, then it is also $A$-, $D$-, and $E$-optimal. We have the following results for diagonal $C$-matrix and  maximum value of $trace(C)$ for a design $d$ in $\mathcal{D}_{N,n,m}$.  

\begin{lemma}\label{lem1}
	For $q\neq q'$, $c_{qq'}=0$, if and only if $\eta_{h_1 \ldots h_r, k_1 \ldots k_l}^{+} = \eta_{h_1 \ldots h_r, k_1 \ldots k_l}^{-}$.
\end{lemma}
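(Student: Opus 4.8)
My plan is to read the off-diagonal entry $c_{qq'}$ straight out of the expression (\ref{Cmat}) for $C$ and then tally its contributions with Lemma \ref{lem-Mij}. Fix $q\neq q'$ and let $B_{h_1 \ldots h_r}$ and $B_{k_1 \ldots k_l}$ be the $q$-th and $q'$-th rows of $B$. Extracting the $(q,q')$ entry of (\ref{Cmat}) amounts to sandwiching each $M^{(j_r j_{r'})}$ between these two contrast rows, and since the outer sum over the choice sets $j_1 < \cdots < j_m$ present in $d$ together with the inner decomposition $M^{(j_1 \ldots j_m)} = \sum_{j_r < j_{r'}} M^{(j_r j_{r'})}$ enumerates precisely the $N^* = N m (m-1)/2$ component pairs of $d$, I would write
\[
c_{qq'} = \frac{1}{2^n N m^2} \sum_{(T_i, T_j)\, \in\, d} B_{h_1 \ldots h_r} M^{(ij)} B'_{k_1 \ldots k_l}.
\]

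The next step is to invoke Lemma \ref{lem-Mij}, which classifies every summand: the scalar $B_{h_1 \ldots h_r} M^{(ij)} B'_{k_1 \ldots k_l}$ equals $4$ exactly when the effective component pair is of type $(00,11)_{h_1 \ldots h_r, k_1 \ldots k_l}$, equals $-4$ exactly when it is of type $(01,10)_{h_1 \ldots h_r, k_1 \ldots k_l}$, and is $0$ otherwise. Because the scalar computed in the proof of Lemma \ref{lem-Mij} is symmetric under interchanging $T_i$ and $T_j$, this type is an attribute of the pair that does not depend on the labelling, so no summand is missed or double counted. By the very definitions of $\eta^{+}_{h_1 \ldots h_r, k_1 \ldots k_l}$ and $\eta^{-}_{h_1 \ldots h_r, k_1 \ldots k_l}$, the number of summands equal to $4$ is $\eta^{+}_{h_1 \ldots h_r, k_1 \ldots k_l}$ and the number equal to $-4$ is $\eta^{-}_{h_1 \ldots h_r, k_1 \ldots k_l}$, with all remaining summands vanishing.

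Collecting these contributions then gives
\[
c_{qq'} = \frac{4}{2^n N m^2}\bigl(\eta^{+}_{h_1 \ldots h_r, k_1 \ldots k_l} - \eta^{-}_{h_1 \ldots h_r, k_1 \ldots k_l}\bigr),
\]
and since the leading factor $4/(2^n N m^2)$ is strictly positive, $c_{qq'} = 0$ if and only if $\eta^{+}_{h_1 \ldots h_r, k_1 \ldots k_l} = \eta^{-}_{h_1 \ldots h_r, k_1 \ldots k_l}$, as claimed. I do not expect a genuine obstacle: the substance is already packaged in Lemma \ref{lem-Mij}, and the only point requiring a little care is verifying that summing first over the choice sets of $d$ and then over their component pairs lists each of the $N^*$ pairs exactly once, so that the $+4$ and $-4$ terms occur with the multiplicities recorded by $\eta^{+}$ and $\eta^{-}$.
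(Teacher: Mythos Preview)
Your proof is correct and follows essentially the same route as the paper: both extract the $(q,q')$ entry from (\ref{Cmat}), expand it as a sum of $B_{h_1\ldots h_r}M^{(ij)}B'_{k_1\ldots k_l}$ over the $N^*$ component pairs, apply Lemma~\ref{lem-Mij} to identify each summand as $4$, $-4$, or $0$, and then read off the equality $\eta^{+}=\eta^{-}$ as the vanishing condition. The only cosmetic difference is that the paper works with $c^*_{qq'}$ (the entry of $B\Lambda^*B'$) before dividing by $2^nNm^2$, whereas you carry the constant throughout.
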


\begin{proof}
	Let $c^*_{qq'}$ denotes the $(q,q')$-th element of $ B\Lambda^* B'$, $q \neq q'$. Then it follows from the expression (\ref{Cmat}) and from the Lemma \ref{lem-Mij} that
	\begin{eqnarray*}\label{Cstarxx}
		C^*_{qq'} &=& \sum_{j_{1}<\cdots <j_{m}} N_{j_1 \ldots j_m}\sum_{j_{r}<j_{r'}}\{B_{h_1 \ldots h_r}M^{(j_{r}j_{r'})} B'_{k_1 \ldots k_l}\}\\
		&=& \left[4(\eta_{h_1 \ldots h_r, k_1 \ldots k_l}^{+} - \eta_{h_1 \ldots h_r, k_1 \ldots k_l}^{-})+0\{N^* - (\eta_{h_1 \ldots h_r, k_1 \ldots k_l}^{+} + \eta_{h_1 \ldots h_r, k_1 \ldots k_l}^{-}) \}\right].
	\end{eqnarray*}
	Thus $c^*_{qq'}$ or equivalently $c_{qq'}=0$, if and only if $\eta_{h_1 \ldots h_r, k_1 \ldots k_l}^{+} = \eta_{h_1 \ldots h_r, k_1 \ldots k_l}^{-}$.
\end{proof}

\begin{lemma}\label{lem2}
	Let $d$ be a design in  $\mathcal{D}_{N,n,m}$, then
	$$ max(trace(C)) = \left\{ \begin{array}{cc}
	Q/2^n & \text{for $m$ even} \vspace{.2cm}\\
	Q(m^2-1)/2^nm^2 & \text{for $m$ odd,} \\
	\end{array}\right.$$
	and the $max(trace (C))$ occurs when $n_{p}(h_1 \ldots h_r) = m/2$ ($m$ even) and $n_{p}(h_1 \ldots h_r) = (m-1)/2$  or  $(m+1)/2$ ($m$ odd) for every effective choice set $S_{p}(h_1 \ldots h_r)$, $p=1,\ldots,N$, $h_1 \ldots h_r \in N_F$.
\end{lemma}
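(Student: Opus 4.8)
The plan is to compute $trace(C)$ directly from the diagonal entries $c_{qq}$ and then maximise the resulting expression as a function of the counts $n_{p}(h_1\ldots h_r)$. First I would specialise the expression (\ref{Cmat}) to the diagonal by taking $B_{k_1\ldots k_l}=B_{h_1\ldots h_r}$. By Corollary \ref{cor-Mij}, a component pair $(T_i,T_j)$ contributes $4$ to $B_{h_1\ldots h_r}M^{(ij)}B'_{h_1\ldots h_r}$ exactly when its effective positions $(i^*_{h_1\ldots h_r},j^*_{h_1\ldots h_r})$ are $0$ and $1$ (in either order), and $0$ in every other case. Hence, within a single choice set $S_{p}$, the contribution to the $(q,q)$ entry of $B\Lambda^{*}B'$ is $4$ times the number of component pairs whose two effective positions differ.

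The second step is a counting step. In the effective choice set $S_{p}(h_1\ldots h_r)$ there are $n_{p}(h_1\ldots h_r)$ zeros and $m-n_{p}(h_1\ldots h_r)$ ones, so the number of component pairs with differing effective positions is exactly $n_{p}(h_1\ldots h_r)\bigl(m-n_{p}(h_1\ldots h_r)\bigr)$. Summing over the $N$ choice sets and dividing by $2^{n}Nm^{2}$ as in (\ref{Cmat}) gives $c_{qq}=\dfrac{4}{2^{n}Nm^{2}}\sum_{p=1}^{N}n_{p}(h_1\ldots h_r)\bigl(m-n_{p}(h_1\ldots h_r)\bigr)$, and therefore $trace(C)=\sum_{q}c_{qq}$ is the double sum of $n_{p}(h_1\ldots h_r)\bigl(m-n_{p}(h_1\ldots h_r)\bigr)$ over all $Q$ effects of interest and all $N$ choice sets, carrying the same constant $4/(2^{n}Nm^{2})$.

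Third, I would maximise term by term. Each summand $n(m-n)$ is a concave quadratic in the integer $n\in\{0,\ldots,m\}$, maximised at $n=m/2$ when $m$ is even (value $m^{2}/4$) and at $n=(m\pm1)/2$ when $m$ is odd (value $(m^{2}-1)/4$). Substituting the per-term maximum into the double sum, which has $QN$ terms, and simplifying the constant yields $Q/2^{n}$ for $m$ even and $Q(m^{2}-1)/(2^{n}m^{2})$ for $m$ odd, as claimed. Moreover, since the bound is attained term by term independently, equality holds precisely when every $n_{p}(h_1\ldots h_r)$ equals the optimising value, which is exactly the stated condition.

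The step I expect to require the most care is the contribution/counting identification in the first two steps: correctly translating "differing effective positions" into the product $n_{p}(m-n_{p})$ and tracking the factor $4$ together with the normalising constant $1/(2^{n}Nm^{2})$ all the way to $trace(C)$. The optimisation itself is routine once the integer constraint $n\in\{0,\ldots,m\}$ is respected. The only genuinely substantive remark is that the upper bound decouples completely across the $Q$ effects and the $N$ choice sets, so no interaction among the $n_{p}(h_1\ldots h_r)$ can obstruct simultaneous maximisation in the bound; the separate question of whether a design actually realises these optimal counts is deferred to the explicit constructions that follow.
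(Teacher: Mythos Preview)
Your argument is correct and follows essentially the same route as the paper's proof: both specialise (\ref{Cmat}) to the diagonal, use Corollary \ref{cor-Mij} to see that a component pair contributes $4$ to $c^*_{qq}$ exactly when the two effective positions differ, count such pairs in $S_{p}(h_1\ldots h_r)$ as $n_{p}(h_1\ldots h_r)\bigl(m-n_{p}(h_1\ldots h_r)\bigr)$, and then maximise each summand over integers. Your write-up is slightly more explicit in tracking the constants and in noting that the bound decouples across effects and choice sets, but the content is the same.
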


\begin{proof}
	Let $B_{h_1 \ldots h_r}$ be the $q$-th contrast of $B$ corresponding to the factorial effect $F_{h_1 \ldots h_r}$ and let $c^*_{qq}$ be the $(q,q)$-th element of  $B\Lambda^* B'$. Note from (\ref{Cmat}) that every component pair $(T_i,T_j)$ adds a value $B_{h_1 \ldots h_r} M^{(ij)}B'_{h_1 \ldots h_r}$ to $c^*_{qq}$. From Corollary \ref{cor-Mij} and Remark \ref{effectivePairs}, we see that this value is 4 if and only if the effective pair $(T_i, T_j)_{h_1\ldots h_r}$ has $i^*_{h_1 \ldots h_r} \neq j^*_{h_1 \ldots h_r}$. Since the contribution of each effective choice set $S_{p}(h_1 \ldots h_r)$ to $c^*_{qq}$ is equivalent to the contributions of all its $m(m-1)/2$ effective component pairs  $(T_i, T_j)_{h_1\ldots h_r}$, then each $S_{p}(h_1 \ldots h_r)$ adds a value $4n_{p}(h_1 \ldots h_r)(m-n_{p}(h_1 \ldots h_r))$ to $c^*_{qq}$. This value is maximum 
	when (i) $n_{p}(h_1 \ldots h_r)=m/2$ (for $m$ even) and (ii) $n_{p}(h_1 \ldots h_r)=(m-1)/2$ or $n_{p}(h_1 \ldots h_r)=(m+1)/2$ (for $m$ odd). Since there are $NQ$ such effective choice sets $S_{p}(h_1 \ldots h_r)$ in $d$, then we have the required expression for  $max(trace(C))$. 
\end{proof}

From Lemma \ref{lem1} and Lemma \ref{lem2}, it follows that a design $d \in \mathcal{D}_{N,n,m}$ is universally optimal for $F$, if (i) $C$ is diagonal, i.e., $\eta_{h_1 \ldots h_r, k_1 \ldots k_l}^{+} = \eta_{h_1 \ldots h_r, k_1 \ldots k_l}^{-}$, for all $h_1 \ldots h_r, k_1 \ldots k_l\in N_F$  and (ii) $ trace(C)$ is maximum, i.e.,  $n_{p}(h_1 \ldots h_r) = m/2$ ($m$ even) and 	$n_{p}(h_1 \ldots h_r) = (m-1)/2$  or  $(m+1)/2$ ($m$ odd), for all $S_{p}(h_1 \ldots h_r)$, $p=1,\ldots,N$, $h_1 \ldots h_r \in N_F$. Henceforth in this paper, by optimal design, we mean universally optimal choice design.  

\section{Optimal designs for the main effects and the broader main effects models} 
Let $F = \{F_{h_1}: h_1=1,\ldots,n \}$ be the set of all main effects of our interest. Let $F_{(2)} = \{F_{h_1h_2}: h_1 < h_2, h_1,h_2 =1,\ldots,n \}$ be the set of all two factor interaction effects and $N_{F_{(2)}}=\{h_1h_2: F_{h_1h_2} \in F_{(2)} \}$. Under the main effects model we obtain optimal designs for $F$ when two and higher order interactions effects are assumed to be zero and under the broader main effects we obtain optimal designs for $F$ when three and higher order interaction effects are assumed to be zero. Let $B_{(1)}$ be the contrast matrix corresponding to all the main effects and $B_{(2)}$ be the contrast matrix corresponding to all the two factor interaction effects. Then for a design $d \in \mathcal{D}_{N,n,m}$, the information matrix $C$ ($= C_{(1)}$ say) of $F$ under the main effects model is
$$ C_{(1)} = (1/2^n) B_{(1)} \Lambda B'_{(1)},$$ 
and the information matrix $C$ ($= C_{(2)}$ say) of $F$ under the broader main effects model is 
\begin{equation*}
C_{(2)} = (1/2^n) \{ B_{(1)} \Lambda B'_{(1)} - B_{(1)} \Lambda B'_{(2)}[B_{(2)} \Lambda B'_{(2)}]^{-}B_{(2)} \Lambda B'_{(1)}\}.
\end{equation*} 

Note that $B_{(1)} \Lambda B'_{(2)}[B_{(2)} \Lambda B'_{(2)}]^{-}B_{(2)} \Lambda B'_{(1)}$ is a non-negative definite matrix and 
$$trace(2^n C_{(2)}) = trace(B_{(1)} \Lambda B'_{(1)}) - trace(B_{(1)} \Lambda B'_{(2)}[B_{(2)} \Lambda B'_{(2)}]^{-}B_{(2)} \Lambda B'_{(1)}).$$
Thus $trace(C_{(2)}) \leq trace(C_{(1)})$ with equality attaining when $B_{(1)} \Lambda B'_{(2)}$ is a null matrix. Following Lemma \ref{lem1} and Lemma \ref{lem2}, we see that a design $d$ is optimal for estimating $F$ under the main effects model if (i) $\eta_{h_1, k_1}^{+} = \eta_{h_1, k_1}^{-}$, for all $h_1,k_1 \in N_F$ and (ii) 
$n_{p(h_1)} = m/2$ ( $m$ even) and 	$n_{p(h_1)} = (m-1)/2$  or  $(m+1)/2$ ($m$ odd), for all $S_{p}(h_1)$, $p=1,\ldots,N$, $h_1 \in N_F$. A design $d$ is optimal for estimating $F$ under the broader main effects model if it satisfies the above two conditions along with (iii) $\eta_{h_1, k_1k_2}^{+} = \eta_{h_1, k_1k_2}^{-}$, for all $h_1 \in N_F$, $k_1k_2 \in N_{F_{(2)}}$. Therefore if a design $d$ is optimal for estimating $F$ under the broader main effects model then it is also optimal under the main effects model but the converse is not always true. In what follows, in this section, we first obtain the optimal designs for $F$ under the broader main effects model and as a corollary we obtain the optimal design for $F$ under the main effects model. 

We use generator technique to construction optimal designs in this section and in the next section. Let $d = (A_1, \ldots, A_m)$ be a design in $\mathcal{D}_{N,n,m}$ and $g_j = (g_{j1},\ldots, g_{jn})$ be the $j$-th generator, where $g_{jr} = 0,1$. When $A_i$ is generated from $A_1$ using $g_j$, then it is denoted by $ A_i = A_1 + g_j$,
and is defined as $ A_i(p,r) =  A_1(p,r) + g_{jr} \hspace {0.2cm} (\text{mod 2})$, $1 \leq p \leq N , \hspace{.2cm}  1 \leq r \leq n$. Let $\bar{A_i}$ denotes the complement of $A_i$, i.e., the elements 0 and 1 interchange their respective positions in $A_i$ and $\bar{d} = (\bar{A}_1, \ldots, \bar{A}_m)$ denotes the complement design of $d = (A_1, \ldots, A_m)$. Similarly, let $\bar{T}_i$ and $\bar{g_i}$  are the complements of $T_i$ and $g_i$. We now obtain some optimal designs for given $n$ and $m$.

\begin{theorem}\label{th-Generator}
	Let $\nu$ and $\nu' (< \nu)$ are two conjugative numbers such that Hadamard matrices of order $\nu$ and $\nu'$ exist. Let $G = \{g_1, \ldots,  g_{\alpha}\}$ be a set of $\alpha$ different generators such that both $g_i$ and $\bar{g}_i \notin G$. Then for $m= even$, there exists a design $d^*_1$ in  $\mathcal{D}_{\nu, n, m}$, and for $m=odd$, there exists a design $d^*_2$ in  $\mathcal{D}_{2\nu, n, m}$, which are optimal for estimating $F$ under the broader main effects model, where $\nu' < n \leq \nu$, $m = 2,\ldots,2\alpha + 1,2\alpha + 2 $.
\end{theorem}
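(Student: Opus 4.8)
The plan is to explicitly construct the designs $d_1^*$ and $d_2^*$ from a Hadamard matrix and then verify the three optimality conditions (i), (ii), (iii) established just before the theorem statement. The key observation is that the optimality conditions are \emph{separate} requirements on the effective choice sets: condition (ii) concerns the column balance of each effective choice set $S_p(h_1)$ (needing exactly $m/2$ zeros when $m$ is even), while conditions (i) and (iii) are orthogonality requirements between pairs of factorial effects, reducing to the equality $\eta^+ = \eta^-$ of the two types of effective component pairs. I would begin by recalling the Hadamard construction: normalize the Hadamard matrix $H_\nu$ so its first row is all $+1$, delete that first row, and relabel the $\pm 1$ entries as $0,1$. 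This produces a matrix with $\nu$ columns whose rows give treatment combinations; because $\nu' < n \leq \nu$, I can select $n$ of these columns to serve as the $n$ factors.

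First I would treat the even-$m$ case. The matrix $A_1$ is built by stacking rows obtained from the normalized Hadamard matrix so that it has $N = \nu$ rows and $n$ columns, and the remaining $A_2, \ldots, A_m$ are generated via $A_i = A_1 + g_{j}$ for the generators $g_1, \ldots, g_\alpha$ together with their complements $\bar g_j$ (so that $m$ up to $2\alpha+2$ profiles per choice set can be accommodated, pairing each generator with its complement). The role of the generator technique is that adding a fixed generator $g_j$ shifts the effective position $i^*_{h_1}$ uniformly across a whole column, so the column-balance property of the Hadamard matrix is inherited by every $A_i$; this is what will force $n_p(h_1) = m/2$ and hence give condition (ii). The use of complementary pairs $\{g_j, \bar g_j\}$ is the crucial device for condition (i): for any two distinct main effects $F_{h_1}, F_{k_1}$, the complementary pair contributes effective component pairs in $(00,11)$ and $(01,10)$ types in equal numbers, forcing $\eta^+_{h_1,k_1} = \eta^-_{h_1,k_1}$. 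The orthogonality of the Hadamard columns handles the balance within $A_1$ itself.

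The genuinely delicate step is condition (iii), the orthogonality $\eta^+_{h_1, k_1 k_2} = \eta^-_{h_1, k_1 k_2}$ between a main effect and a two-factor interaction. The effective position of the interaction $F_{k_1 k_2}$ is $i^*_{k_1 k_2} = 3 - (i_{k_1} + i_{k_2}) \pmod 2$, equivalently the product $b^{(i)}_{k_1} b^{(i)}_{k_2}$ in contrast terms, so verifying (iii) amounts to a three-column orthogonality condition on the Hadamard-derived matrix rather than the two-column condition that suffices for (i). Here I expect to lean on the defining property $H H' = \nu I$ together with the self-complementary structure of the generator set: I would show that within a single block $A_1$ the relevant signed counts cancel because of mutual orthogonality of any three distinct Hadamard columns, and that across the generator shifts the complementary pairing again balances the remaining contributions. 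This is the main obstacle, since three-factor balance is not automatic from Hadamard orthogonality alone and will require carefully tracking how the shift $g_j$ acts on the product contrast $b^{(i)}_{k_1} b^{(i)}_{k_2}$.

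Finally, for the odd-$m$ case I would pass to the design $d_2^*$ in $\mathcal{D}_{2\nu, n, m}$ by taking the foldover: appending to $d_1^*$ its complement design $\bar d_1^*$ so the number of choice sets doubles to $2\nu$. With $m$ odd, condition (ii) only requires $n_p(h_1) \in \{(m-1)/2, (m+1)/2\}$, and the foldover guarantees that deficits in one copy are matched by surpluses in its complement, so $\mathrm{trace}(C)$ attains the odd-$m$ bound of Lemma \ref{lem2}; the complementation also preserves all the $\eta^+ = \eta^-$ equalities established for the even case, so conditions (i) and (iii) carry over. Having verified all three conditions, the combined criterion stated after Lemma \ref{lem2} yields universal optimality of $d_1^*$ (resp.\ $d_2^*$) for $F$ under the broader main effects model, completing the proof.
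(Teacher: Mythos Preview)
Your construction has a small inconsistency: you delete the all-ones row of the normalized Hadamard matrix but then claim $A_1$ has $N=\nu$ rows. The paper keeps all $\nu$ rows of $H$, selects $n$ columns, and relabels; no row is deleted. This matters because the choice-set count must be exactly $\nu$.

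More importantly, your plan for condition (iii) has a genuine gap. You propose to handle $\eta^{+}_{h_1,k_1k_2}=\eta^{-}_{h_1,k_1k_2}$ by appealing to ``mutual orthogonality of any three distinct Hadamard columns'' within $A_1$, but Hadamard matrices do \emph{not} enjoy three-column orthogonality in general. For instance, in the Sylvester $H_4$ the elementwise product of columns $2,3,4$ is the all-ones column, so the triple sum is $4$, not $0$. Your within-$A_1$ cancellation therefore fails, and the subsequent ``tracking of shifts $g_j$'' cannot repair it.

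The paper's mechanism is different and does not touch three-column balance at all. Because the construction pairs $A_1$ with $\bar A_1$ and $A_1+g_u$ with $\bar A_1+g_u=\overline{A_1+g_u}$, every component paired design $\delta_{ij}$ in $d_1^*$ is accompanied by its complement $\bar\delta_{ij}$. The key parity observation is that complementing a treatment flips the effective position $i^*_{h_1}$ of any main effect (an odd-order effect) but leaves the effective position $i^*_{k_1k_2}$ of any two-factor interaction (an even-order effect) unchanged. Hence an effective component pair of type $(00,11)_{h_1,k_1k_2}$ in $\delta_{ij}$ becomes one of type $(10,01)_{h_1,k_1k_2}=(01,10)_{h_1,k_1k_2}$ in $\bar\delta_{ij}$, and vice versa. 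This bijection forces $\eta^{+}_{h_1,k_1k_2}=\eta^{-}_{h_1,k_1k_2}$ directly, with no appeal to any Hadamard property beyond two-column orthogonality (which is what you correctly use for condition (i)). Your odd-$m$ foldover idea is essentially the paper's, and your treatment of conditions (i) and (ii) is fine; it is only the route to (iii) that needs to be replaced by this complementation-parity argument.
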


\begin{proof}
	Let $d = (A_1,  \ldots, A_m)$. Consider $H$ be a Hadamard matrix of order $\nu$. Let $A_1$ is derived from $H$ by taking any $n$ columns of $H$ and replacing all the $-1$ entries with 0 entries. Let $A_2 = \bar{A}_1$. Using the $\alpha$ generators from $G$, generate the other components of $d$ in the following manner \vspace{-.3cm} \\
	$$ A_{2u+1} = A_1 + g_u, A_{2u+2} = A_2 + g_u, \hspace{0.2 cm} u = 1, \ldots, \alpha. $$ 
	Consider, $d^*_1 = d$ ($m$ even) and $d^*_2 = \{\hspace{.1 cm} d,\bar{d}\hspace{.1 cm}\}$ ($m$ odd).  
	\begin{flushleft}
		{ Claim:} When $m= even$, $d^*_1$ is optimal in  $\mathcal{D}_{\nu, n, m}$ for estimating $F$ under the broader main effects model.
	\end{flushleft}
	Note that the information matrix $C$ of the design $d$ is the sum of information matrix corresponding to all component paired design $\delta_{ij} = (A_i, A_j)$, $i<j$, $i,j = 1,\ldots,m$. Note also that both $\delta_{ij}$ and  $\bar{\delta}_{ij}$ are present in $d^*_1$. Now according to the construction of $d^*_1$, we see that both the effective component pairs $(00, 11)_{h_1, k_1}$ and $(01, 10)_{h_1, k_1}$ occur equally often in each $\delta_{ij}$ of $d^*_1$. We also observe that the existence of the effective component pair $(00, 11)_{h_1, k_1k_2}$ in $\delta_{ij}$ ensures the existence of $(01, 10)_{h_1, k_1k_2}$ in the corresponding effective component pair of $\bar{\delta}_{ij}$ and vice-versa. Thus for the design $d^*_1$, $\eta_{h_1, k_1}^{+} = \eta_{h_1, k_1}^{-}$, for all $h_1,k_1 \in N_F$, and $\eta_{h_1, k_1k_2}^{+} = \eta_{h_1, k_1k_2}^{-}$, for all $h_1 \in N_F$, $k_1k_2 \in N_{F_{(2)}}$. Hence $C$ is a diagonal matrix with $B_{(1)} \Lambda B'_{(2)} = 0$. Also note that for every $S_{p}(h_1)$ of $d^*_1$, $n_p(h_1) = m/2$, $p = 1, \ldots, \nu$, $h_1 \in N_F$, and thus $trace(C)$ is maximum. Hence $d^*_1$ is optimal in $\mathcal{D}_{\nu,n,m}$ for estimating $F$ under the broader main effects model.

	\begin{flushleft}
		{ Claim:} When $m= odd$, $d^*_2$ is optimal in  $\mathcal{D}_{2\nu, n, m}$ for estimating $F$ under the broader main effects model.
	\end{flushleft}
	\vspace{-.1 cm}
	The proof follows in the same way as $m=even$ case on noting that for every component paired design $\delta_{ij}$ of $d$, the corresponding component paired design in $\bar{d}$ is $\bar{\delta}_{ij}$ and for every $S_{p}(h_1)$ of $d^*_2$, $n_p(h_1) = (m-1)/2$ or $(m+1)/2$, $p = 1, \ldots, 2\nu$, $h_1 \in N_F$. 
\end{proof}

\begin{corollary}
	For any odd or even $m$, both $d = (A_1, \ldots, A_m)$ and $\bar{d} = (\bar{A}_1,  \ldots, \bar{A}_m)$ are optimal in $\mathcal{D}_{\nu,n,m}$ for estimating $F$ under the main effects model.
\end{corollary}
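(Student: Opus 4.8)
The plan is to observe that optimality under the main effects model requires only the first two of the three conditions used in Theorem \ref{th-Generator}, namely that $C_{(1)}$ be diagonal ($\eta_{h_1, k_1}^{+} = \eta_{h_1, k_1}^{-}$ for all $h_1,k_1 \in N_F$) and that $trace(C_{(1)})$ be maximal ($n_{p}(h_1) = m/2$ for $m$ even and $n_{p}(h_1) = (m-1)/2$ or $(m+1)/2$ for $m$ odd). Condition (iii) of the broader model, $\eta_{h_1, k_1k_2}^{+} = \eta_{h_1, k_1k_2}^{-}$, plays no role here, and it is precisely the condition whose verification in the proof of Theorem \ref{th-Generator} forced us to pair $\delta_{ij}$ with its complement $\bar{\delta}_{ij}$. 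Dropping it is what will let a single copy $d$ (rather than $\{d,\bar d\}$) be optimal in $\mathcal{D}_{\nu,n,m}$ for every $m$, including $m$ odd. For $m$ even the claim for $d$ is already immediate, since Theorem \ref{th-Generator} shows $d = d^*_1$ is optimal for the stronger broader main effects model; so the substance is in the odd case and in handling $\bar{d}$.

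For the diagonality condition I would invoke the observation established inside the proof of Theorem \ref{th-Generator}: for every component-paired design $\delta_{ij} = (A_i, A_j)$, the effective pairs $(00,11)_{h_1,k_1}$ and $(01,10)_{h_1,k_1}$ occur equally often. This balance holds \emph{within} each $\delta_{ij}$ and rests only on the orthogonality of the Hadamard columns from which $A_1$ is built: writing $A_j = A_i + g$, the pair $\delta_{ij}$ contributes to $\eta^{+}_{h_1,k_1}$ and $\eta^{-}_{h_1,k_1}$ only when $g_{h_1}=g_{k_1}=1$, in which case the $\nu$ rows split evenly (each effective type arising $\nu/2$ times), because columns $h_1$ and $k_1$ of $A_1$ agree on exactly $\nu/2$ rows and disagree on the other $\nu/2$. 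Since this conclusion is indifferent to how many components $d$ has, summing over all $\binom{m}{2}$ component pairs gives $\eta_{h_1, k_1}^{+} = \eta_{h_1, k_1}^{-}$ for every $h_1,k_1 \in N_F$ and any $m$. Hence $C_{(1)}$ is diagonal for both odd and even $m$.

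For the trace condition I would exploit the complementary-pair structure of the components. By construction $A_2 = \bar{A}_1$ and $A_{2u+2} = A_2 + g_u = \bar{A}_{2u+1}$, so the components split into complementary pairs $(A_1,A_2),(A_3,A_4),\ldots$; in a fixed effective choice set $S_{p}(h_1)$ each such pair contributes exactly one $0$, since one member carries $0$ in column $h_1$ and the other $1$. When $m$ is even all $m/2$ pairs are complete, giving $n_{p}(h_1) = m/2$; when $m = 2\beta+1$ is odd there are $\beta$ complete pairs together with one unpaired component, so $n_{p}(h_1)$ equals $\beta$ or $\beta+1$, i.e., $(m-1)/2$ or $(m+1)/2$. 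Thus $trace(C_{(1)})$ attains its maximum in either parity, and together with diagonality this shows $d$ is optimal for $F$ under the main effects model; connectedness is automatic, since a diagonal $C_{(1)}$ with maximal (hence positive) diagonal entries is positive definite.

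Finally, for $\bar{d}$ I would note that complementation replaces each component pair $(T_i,T_j)$ by $(\bar T_i,\bar T_j)$, which sends an effective pair of type $(00,11)_{h_1,k_1}$ to $(11,00)_{h_1,k_1}$ and one of type $(01,10)_{h_1,k_1}$ to $(10,01)_{h_1,k_1}$; as unordered pairs these are the same types, so $\eta^{+}_{h_1,k_1}$ and $\eta^{-}_{h_1,k_1}$ are unchanged and diagonality persists. Likewise each $n_{p}(h_1)$ is replaced by $m - n_{p}(h_1)$, which is again $m/2$ for even $m$ and $(m-1)/2$ or $(m+1)/2$ for odd $m$, so the trace stays maximal; hence $\bar d$ is optimal as well. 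I expect the only real subtlety to be the first step, namely recognizing that the main effects balance $\eta^{+}_{h_1,k_1}=\eta^{-}_{h_1,k_1}$ is already forced within a single $\delta_{ij}$ by Hadamard orthogonality, so that, unlike the broader model, no complementary copy is needed and the design stays optimal in $\mathcal{D}_{\nu,n,m}$ even when $m$ is odd.
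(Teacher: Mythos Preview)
Your proposal is correct and follows essentially the same approach as the paper. The paper gives no explicit proof of this corollary, leaving it to the reader to extract from the proof of Theorem \ref{th-Generator} that condition (iii) is the only one requiring the pairing of $\delta_{ij}$ with $\bar{\delta}_{ij}$, while conditions (i) and (ii) already hold within $d$ itself; your write-up makes this explicit and supplies the supporting details (the Hadamard column orthogonality behind the balance $\eta^{+}_{h_1,k_1}=\eta^{-}_{h_1,k_1}$ in each $\delta_{ij}$, the complementary-pair decomposition for the trace count, and the invariance of both conditions under complementation).
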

\begin{exmp}
	Suppose $F = \{F_1,\ldots, F_8\}$ and we want to construct optimal designs for $m=5$ and $m=6$ under the broader main effects model. Take $H$ be the Hadamard matrix of order 8 and $A$ be the matrix is generated from $H$ by replacing all its $-1$ entries with 0 entries. For $m=6$, let $d_1 = (A_1,A_2,A_3,A_4,A_5,A_6)$, where $A_1= A$ and $ A_2 = \bar{A}$, $A_3=A_1+g_1$, $A_4=A_2+g_1$, $A_5=A_1+g_2$ and $A_6=A_2+g_2$. If we take $g_1 = (11100000)$ and $g_2 = (00000011)$, then
	\begin{center}
		$d_1 = \begin{array}{cccccc}
		(11111111,&00000000,&00011111,&11100000,&11111100,&00000011)\\
		(10101010,&01010101,&01001010,&10110101,&10101001,&01010110)\\
		(11001100,&00110011,&00101100,&11010011,&11001111,&00110000)\\
		(10011001,&01100110,&01111001,&10000110,&10011010,&01100101)\\
		(11110000,&00001111,&00010000,&11101111,&11110011,&00001100)\\
		(10100101,&01011010,&01000101,&10111010,&10100110,&01011001)\\
		(11000011,&00111100,&00100011,&11011100,&11000000,&00111111)\\
		(10010110,&01101001,&01110110,&10001001,&10010101,&01101010)
		\end{array}$
	\end{center}
	is optimal in $\mathcal{D}_{8,8,6}$ for $F$ under the broader main effects model.
	
	Now for $m=5$, let $d_2 = (A_1,A_2,A_3,A_4,A_5)$. Then $ d_{2}^* = \{d_2, \bar{d}_2\}$  is optimal in  $\mathcal{D}_{16,8,5}$ under the broader main effects model.
	Note that $d_1$ and $d_2$ are also optimal in $\mathcal{D}_{8,8,6}$ and $\mathcal{D}_{8,8,5}$ respectively for $F$ under the main effects model.  
\end{exmp}
The construction of Theorem \ref{th-Generator} is a general construction in a sense that for given $m$ and $n$, one can always find an optimal design for some $N$ $(\geq n)$, provided a Hadamard matrix of order $N$ exists. We now construct optimal designs for some specific values of $m$ and for any $n$, which are better than the designs provided by Theorem \ref{th-Generator} in a sense that they produce optimal designs in less number of choice sets ($N < n$, in most of the cases). For example, let $\nu$  be the least number greater than or equal to $n$ such that a Hadamard matrix of order $\nu$ exists. Let $H$ be a normalized Hadamard matrix of order $\nu$ and $A$ be the matrix is derived from $H$ by taking only $n$ columns of $H$ and replacing all the $-1$ entries with 0 entries. Let $A = (T_1, \ldots, T_{\nu})'$, where $T_i$ is a typical treatment combination. Then it is easy to see that
$ d^*_1 = (T_1, \ldots, T_{\nu}, \bar{T}_1,\ldots, \bar{T}_{\nu} )$ is optimal in $\mathcal{D}_{1,n,2\nu}$ for estimating $F$ under the broader main effects model. Now if the above $A$ does not contains the first column of $H$, then  
$d^*_2 = \{(T_1, \ldots, T_{\nu}), (\bar{T}_1,\ldots, \bar{T}_{\nu})\}$ is optimal in $\mathcal{D}_{2,n,\nu}$, $n < \nu$, for estimating $F$ under the broader main effects model. 
Note that the choice sets $(T_1, \ldots, T_{\nu})$ or $(\bar{T}_1,\ldots, \bar{T}_{\nu})$ is optimal in $\mathcal{D}_{1,n,\nu}$, $n < \nu$, for estimating $F$ under the main effects model.
\begin{exmp}\label{exmp-2}
	Let $H = (1111, 1010, 1100, 1001)'$ be a normalized Hadamard matrix of order 4, then 
	$d^*_1 = (1111, 1010, 1100, 1001, 0000, 0101, 0011, 0110)$
	is optimal in $\mathcal{D}_{1,4,8}$ for $F=\{F_1,F_2,F_3,F_4\}$ and $ d^*_2 = \{(111, 100, 010,$ $ 001), (000, 011, 101, 110 )\}$ is optimal in $\mathcal{D}_{2,3,4}$ for $F=\{F_1,F_2,F_3\}$ under the broader main effects model.
\end{exmp}

From the discussion above we see that for given $m=\nu$, the two-run design $d^*_2$ is optimal for $n <\nu$. We now generalize this idea for the cases when $n \geq \nu$. Before presenting the next construction, we need to define a new operation. Let $S_1 = (T_{11},\ldots,T_{1m})$ and $S_2 = (T_{21},\ldots,T_{2m})$ are two choice sets of size $m$ with $n_1$ and $n_2$ factors respectively. We denote the direct addition of $S_1$ and $S_2$ as $S_1 \oplus S_2$ and is defined by $S_1 \oplus S_2 = (T_{11}T_{21},\ldots,T_{1m}T_{2m})$, which is a new choice sets of size $m$ with $n_1+n_2$ factors. The definition carry forward for two choice designs in the similar way. Suppose $d_1 = \{S_{11},\ldots, S_{1N}\}$ and $d_2 = \{S_{21},\ldots, S_{2N}\}$ are two designs with $N$ choice sets with $n_1$ and $n_2$ factors respectively. Then the direct addition of $d_1$ and $d_2$ is denoted by $d_1 \oplus d_2$ and is defined by $d_1 \oplus d_2 = \{S_{11} \oplus S_{21},\ldots, S_{1N} \oplus S_{2N}\}$, which is a new design with $N$ choice sets with $n_1 + n_2$ factors.  

\begin{theorem}
	Let $H$ be a Hadamard matrix of order $\nu$. Then there exists an optimal design $d^*$ in $\mathcal{D}_{2^{\alpha+1},n,\nu}$ for estimating $F$ under the broader main effects model, $2^{\alpha-1}(\nu -1) < n \leq 2^{\alpha}(\nu -1)$, $\alpha = 1,2,3,\ldots$.
\end{theorem}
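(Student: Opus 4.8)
The plan is to construct $d^*$ by iterating the \emph{direct addition} operation, doubling both the number of factors and the number of choice sets at each stage, starting from the two--run base design of the preceding discussion. Let $S_0=(T_1,\dots,T_\nu)$ be the choice set built from $\nu-1$ non--initial columns of a normalised $H$, so that $\{S_0,\bar S_0\}=d^*_2\in\mathcal D_{2,\nu-1,\nu}$ is already optimal for the broader main effects model. Suppose inductively that $d\in\mathcal D_{2^{\beta+1},\,2^{\beta}(\nu-1),\nu}$ is optimal; define
$$d^{(+)}=\{\,S\oplus S:\ S\in d\,\}\cup\{\,S\oplus\bar S:\ S\in d\,\}\in\mathcal D_{2^{\beta+2},\,2^{\beta+1}(\nu-1),\nu},$$
whose $2^{\beta+1}(\nu-1)$ factors split into an \emph{old} block (the first $2^{\beta}(\nu-1)$) and an \emph{appended} block of equal size. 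Iterating from $d_2^*$ a total of $\alpha$ times gives a design in $\mathcal D_{2^{\alpha+1},\,2^{\alpha}(\nu-1),\nu}$, and deleting surplus factors---which only restricts conditions (i)--(iii) to a sub--collection of constraints already verified---yields the required design for every $n$ with $2^{\alpha-1}(\nu-1)<n\le 2^{\alpha}(\nu-1)$. Hence the theorem reduces to a single claim: the doubling $d\mapsto d^{(+)}$ preserves optimality for the broader main effects model.

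To settle that claim I will use the characterisation following Lemma~\ref{lem1} and Lemma~\ref{lem2}: $d^{(+)}$ is optimal iff (i) $\eta^{+}_{h_1,k_1}=\eta^{-}_{h_1,k_1}$ for all main effects, (ii) every effective choice set $S_p(h_1)$ is balanced, and (iii) $\eta^{+}_{h_1,k_1k_2}=\eta^{-}_{h_1,k_1k_2}$ for all main $h_1$ and two--factor $k_1k_2$. The one structural fact driving every cancellation is the action of complementation on effective positions: since $i^{*}_{h_1\ldots h_r}\equiv r+1-\sum_s i_{h_s}\pmod 2$, complementing a treatment changes $i^{*}_{h_1\ldots h_r}$ by $r\pmod 2$, i.e.\ it flips the effective position of an \emph{odd}--order effect and leaves that of an \emph{even}--order effect unchanged. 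Consequently, passing from $S\oplus S$ to $S\oplus\bar S$ (same $S$) complements only the appended block, so an effect's effective position flips precisely when an odd number of its factors lie in the appended block, while the old block is untouched.

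Condition (ii) is immediate: in every choice set each factor appears either as a base column or its complement, both balanced because $S_0$ comes from a Hadamard matrix, so $n_p(h_1)=\nu/2$ everywhere. For (i) and (iii) I pair the choice sets of $d^{(+)}$ as $(S\oplus S,\ S\oplus\bar S)$ and classify the relevant effects by the parity of the number of the factors $\{h_1\}$, respectively $\{h_1,k_1,k_2\}$, lying in the appended block. If that number is \emph{odd}, then by Lemma~\ref{lem-Mij} and the flipping rule each $(00,11)$ effective component pair of $S\oplus S$ becomes a $(01,10)$ pair of $S\oplus\bar S$ under the option--index preserving bijection, and vice versa; hence these two choice sets contribute equally to $\eta^{+}$ and $\eta^{-}$ and cancel. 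If that number is \emph{even}, the two choice sets contribute identically, and after projecting the appended indices back to the old block the joint count equals twice the corresponding quantity in $d$, which is balanced by the induction hypothesis (a coincidence between an old and an appended index simply collapses the effect to a lower--order one, balanced by (i) or (ii)). Running through the few distributions of one, two, or three indices over the two blocks shows every off--diagonal count balances, so $C_{d^{(+)}}$ is diagonal with $B_{(1)}\Lambda B'_{(2)}=0$ and maximal trace.

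I expect the real obstacle to be condition (iii) for two--factor interactions that \emph{straddle} the two blocks (exactly one of $k_1,k_2$ appended), together with the three--index balance it triggers once $h_1$ is taken into account: one must confirm that the appended complement flips precisely these straddling contributions and no others, and that the ``even'' case genuinely falls back on a condition already present in $d$ rather than on a new three--factor requirement outside the inductive scope. The parity computation on $i^{*}$ above is exactly what controls this. Finally, unrolling the recursion presents $d^*$ as the direct addition of $2^{\alpha}$ copies of $S_0$ over the blocks, with the pattern of complemented blocks across the $2^{\alpha+1}$ choice sets given by $2^{\alpha}$ columns of a Sylvester--Hadamard matrix of order $2^{\alpha+1}$; their pairwise orthogonality and vanishing column sums are precisely conditions (i)--(iii), giving an explicit closed form for the construction.
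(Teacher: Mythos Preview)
Your construction is the same design as the paper's: your iterated doubling of $\{S_0,\bar S_0\}$ produces exactly $\{d_\alpha,\bar d_\alpha\}$ in the paper's notation, since applying $d\mapsto d^{(+)}$ to $\{d_{\beta},\bar d_{\beta}\}$ yields $\{d_{\beta+1},\bar d_{\beta+1}\}$. Where you differ is in the \emph{verification}. The paper does not argue inductively; it checks the three conditions directly on the final $d^*=\{d,\bar d\}$: condition~(i) is read off from Hadamard orthogonality of the underlying columns (so each choice set already balances $(00,11)$ against $(01,10)$ for any pair of main effects), and condition~(iii) is dispatched by the single global pairing $d\leftrightarrow\bar d$, using exactly your parity observation that complementing all factors flips $i^*_{h_1}$ but fixes $i^*_{k_1k_2}$. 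Your route instead proves that the step $d\mapsto d^{(+)}$ preserves optimality, pairing $(S\oplus S,\,S\oplus\bar S)$ at every stage and classifying by the parity of the number of appended indices; the ``odd'' cases cancel within the pair and the ``even'' cases reduce to the inductive hypothesis on $d$. This is correct --- in particular, your worry about straddling interactions is handled, because in every even case the effective positions in $S\oplus S$ coincide with those of a main/two-factor pair in $S$, so one lands back on condition~(i) or~(iii) for $d$ (with the degenerate coincidence $k'_1=k'_2$ contributing zero). Your argument is more explicit and makes the recursive mechanism transparent; the paper's is shorter because it exploits the Hadamard balance once rather than tracking it through the recursion. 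Your closing remark identifying the complement pattern with Sylvester--Hadamard columns is a nice closed form that the paper does not state.
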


\begin{proof}
	Let $H$ be a normalized Hadamard matrix of order $\nu$ and $A$ be the $(\nu-1) \times \nu $ matrix is derived from $H$ by deleting the 1-st column and replacing all the $-1$ entries with $0$ entries. Let $d_0$ is the  choice set of size $\nu$ whose treatments are the rows of $A$. Then for $\alpha=1$, let \vspace{-.4cm}
	$$d_1 = \{d_0\oplus d_0, d_0\oplus \bar{d}_0\}.$$
	Similarly, for any $\alpha > 1$, \vspace{-.3cm}
	$$ d_{\alpha} = \{d_{\alpha-1}\oplus d_{\alpha-1},  d_{\alpha-1}\oplus \bar{d}_{\alpha-1}\}.$$
	Let $d$ be the design with $2^{\alpha}$ choice sets is generated from $d_{\alpha}$ by taking only $n$ corresponding factors from each treatment of each choice set of $d_{\alpha}$. Let $d^*= \{\hspace{.1cm} d,\bar{d} \hspace{.1cm}\}$. Thus $d^*$ is a design with $2^{\alpha+1}$ choice sets and $n$ factors, $2^{\alpha-1}(\nu -1) < n \leq 2^{\alpha} (\nu -1)$.  
	
	\begin{flushleft}
		{Claim:} $d^*$ is optimal in $\mathcal{D}_{2^{\alpha+1}, n, \nu}$ for estimating $F$ under the broader main effects model.
	\end{flushleft}
	Note from the construction of $d$ ( or $\bar{d}$ ) that both the effective component pairs $(00,11)_{h_1,k_1}$ and $(01,10)_{h_1,k_1}$ occur equally often in $d$ (or $\bar{d}$). Thus $\eta_{h_1, k_1}^{+} = \eta_{h_1, k_1}^{-}$, for all $h_1, k_1 \in N_F$.  We also observe that the existence of a effective component pair $(00, 11)_{h_1, k_1k_2}$ in $d$ ensures the existence of $(01, 10)_{h_1, k_1k_2}$ in the corresponding effective component pair of $\bar{d}$ and vice-versa. Thus $\eta_{h_1, k_1k_2}^{+} = \eta_{h_1, k_1k_2}^{-}$, for all $h_1 \in N_F$, $k_1k_2 \in N_{F_{(2)}}$. Also note that for every $S_{p}(h_1)$ of $d^*$, $n_p(h_1) = \nu /2$, $p=1,\ldots,2^{\alpha+1}$, $h_1 \in N_F$. Hence $d^*$ is optimal in $\mathcal{D}_{2^{\alpha+1}, n, \nu}$ for estimating $F$ under the broader main effects model. 
\end{proof}

\begin{corollary}
	Note that both $d$ and $\bar{d}$ are optimal in $\mathcal{D}_{2^{\alpha},n,\nu}$ for estimating $F$ under the  main effects model, $2^{\alpha-1}(\nu -1) < n \leq 2^{\alpha}(\nu -1)$, $\alpha = 1,2,3,\ldots$.
\end{corollary}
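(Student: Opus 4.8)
The plan is to verify, for $d$ on its own (and symmetrically for $\bar d$), the two conditions that by the discussion following Lemma~\ref{lem1} and Lemma~\ref{lem2} characterise optimality for $F$ under the main effects model: condition (i), that $\eta_{h_1,k_1}^{+}=\eta_{h_1,k_1}^{-}$ for all $h_1,k_1\in N_F$, which forces $C_{(1)}$ to be diagonal, and condition (ii), that $n_p(h_1)=m/2$ (here $m=\nu$ is even) for every effective choice set $S_p(h_1)$, which makes $trace(C_{(1)})$ maximal. Crucially, the third condition, $\eta_{h_1,k_1k_2}^{+}=\eta_{h_1,k_1k_2}^{-}$, is \emph{not} imposed here, since under the main effects model the two-factor interactions are assumed absent.

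First I would point out that both required facts were in effect already checked inside the proof of the preceding Theorem, but phrased there for $d$ (or $\bar d$) individually rather than for the paired design $d^*$. Indeed, that proof records that in $d$ alone the effective component pairs $(00,11)_{h_1,k_1}$ and $(01,10)_{h_1,k_1}$ occur equally often, which is exactly condition (i); and it records that $n_p(h_1)=\nu/2$ for every effective choice set $S_p(h_1)$ of $d^*$. Since the $2^{\alpha}$ choice sets of $d$ form a sub-collection of the $2^{\alpha+1}$ choice sets of $d^*$, the value $n_p(h_1)=\nu/2$ persists for each effective choice set of $d$, which (with $\nu$ even) gives condition (ii) in $\mathcal{D}_{2^{\alpha},n,\nu}$.

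The observation I would then stress is that condition (iii), equivalently $B_{(1)}\Lambda B'_{(2)}=0$, was the single place where the complementary pairing $\{d,\bar d\}$ was genuinely needed in the Theorem, being obtained by matching a pair $(00,11)_{h_1,k_1k_2}$ in $d$ with its counterpart $(01,10)_{h_1,k_1k_2}$ in $\bar d$. As this condition is simply not required for the main effects model, conditions (i) and (ii), both already satisfied by $d$ alone, suffice, so $d$ is optimal for $F$ under the main effects model in $\mathcal{D}_{2^{\alpha},n,\nu}$. The identical argument with the symbols $0$ and $1$ interchanged (equivalently, running the construction from $\bar A$ in place of $A$) yields the same conclusion for $\bar d$. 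Since every step merely reads off assertions already proved for $d$ and $\bar d$ separately, there is essentially no obstacle; the only point needing care is to confirm that condition (i) holds within a single half and does not secretly rely on the pairing, and this is immediate because the Theorem's proof states the equality $\eta_{h_1,k_1}^{+}=\eta_{h_1,k_1}^{-}$ for $d$ (or $\bar d$) by itself.
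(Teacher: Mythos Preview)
Your proposal is correct and follows exactly the approach implicit in the paper: the corollary is stated without proof there, but it is meant to be read off directly from the theorem's proof, where the equality $\eta_{h_1,k_1}^{+}=\eta_{h_1,k_1}^{-}$ is established for $d$ (or $\bar d$) individually and $n_p(h_1)=\nu/2$ is noted for every effective choice set, while the pairing with $\bar d$ is invoked only for the interaction condition (iii) that the main effects model does not require. Your write-up simply makes these observations explicit.
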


\begin{exmp}
	Suppose we want an optimal design for $F=\{F_1,F_2,F_3,F_4,F_5\}$ and $m =4$.\\ Let $H = (1111,  1100, 1010, 1001)'$ be a normalized Hadamard matrix of order 4, then 
	$ d_0 = \{111, 100, 010,001 \}$.
	Therefore for $\alpha =1$ \\
	$d_1 = \{d_0\oplus d_0, d_0\oplus \bar{d}_0\}= \{ ( 1 1 1 1 1 1, 1 0 0 1 0 0, 0 0 1 0 0 1, 0 1 0 0 1 0), (1 1 1 0 0 0, 1 0 0 0 1 1, 0 0 1 1 1 0, 0 1 0 1 0 1) \}$. Let $d$ be the design is obtained from $d_1$ removing the last factor of each treatments, then 
	$d^* = \{\hspace{.1cm} d, \bar{d} \hspace{.1cm} \} = $
	$\{ ( 1 1 1 1 1, 1 0 0 1 0, 0 0 1 0 0, 0 1 0 0 1), (1 1 1 0 0, 1 0 0 0 1, 0 0 1 1 1, 0 1 0 1 0), (0 0 0 0 0, 0 1 1 0 1, \\ 1 1 0 1 1, 1 0 1 1 0), (0 0 0 1 1, 0 1 1 1 0, 1 1 0 0 0, 1 0 1 0 1) \}$ is optimal in $\mathcal{D}_{4, 5, 4 }$ for $F$ under the broader main effects model. Also note that  
	$d = \{ ( 1 1 1 1 1 , 1 0 0 1 0, 0 0 1 0 0, 0 1 0 0 1), (1 1 1 0 0, 1 0 0 0 1, 0 0 1 1 1, 0 1 0 1 0 )\}$ or $ \bar{d} = \{ 
	(0 0 0 0 0, 0 1 1 0 1, 1 1 0 1 1, 1 0 1 1 0), (0 0 0 1 1, 0 1 1 1 0, 1 1 0 0 0, 1 0 1 0 1)\} $ are optimal in $\mathcal{D}_{2, 5, 4 }$ for $F$ under the main effects model. 
\end{exmp}

\section{Optimal designs for the main effects and the specified interaction effects model}
In many choice investigation problems researchers need the information about two and higher order interaction effects along with the main effects. In most of such cases, the information about all the interaction effects are not so important but interaction effects with some specified factors are much more important. In this section we obtain optimal designs for such situations. We first present constructions of optimal designs for estimating all the main effects and all the interaction effects with a single specified factor and later on we generalize this idea with more than one specified factors. Without loss of generality we assume the first factor to be the specified factor. Thus if $F$ is the set of all factorial effects of our interest, then  $F = \{F_1, \ldots, F_n$, $F_{12}, \ldots, F_{1n}, F_{123}, \ldots, F_{12 \ldots n}\}$. The factorial effects which are not in $F$ are assumed to be zero in this section. For a component matrix $A_i$ of a design $d = (A_1, \ldots, A_i, \ldots, A_m)$, we define ${\mathcal C}_{A_i}(h_1 \ldots h_r) = \left( i^*_{h_1 \ldots h_r} \right)_{N \times 1}$ to be the effective column of $A_i$ corresponding to the factorial effect $F_{h_1 \ldots h_r}$.  Let $H$ is a Hadamard matrix of order $2^{\alpha}$, $\alpha \geq 2$, where 
\begin{equation} \label{Hadmard-H}
H = \left[ \begin{array}{rr} 1 & 1 \\ 1 & -1 \end{array} \right] \otimes \cdots \otimes \left[\begin{array}{rr} 1 & 1 \\ 1 & -1 \end{array} \right].
\end{equation}

\noindent Here $\otimes$ denotes the Kronecker product. Let 
\begin{equation}\label{matrix-A}
A =  \{\text{a (0,1) matrix is derived from H of (\ref{Hadmard-H}) after replacing all the $-1$ by $0$}\}.
\end{equation}

\noindent  Note that every effective column ${\mathcal C}_{A}(h_1 \ldots h_r)$ is equivalent to a column of $A$. We use this matrix $A$ in many of our constructions of optimal designs in this section.  

\begin{theorem}\label{th-speAll}
	For $2^{\alpha - 1} <n \leq 2^{\alpha}$, $\alpha \geq 2$, there exists an optimal design $d^*$ in $\mathcal{D}_{2^{\alpha}, n, 4}$ for estimating $F$.
\end{theorem}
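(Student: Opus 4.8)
The plan is to verify the two sufficient conditions for universal optimality isolated just after Lemma~\ref{lem2}: that $C$ is diagonal (equivalently, by Lemma~\ref{lem1}, $\eta^{+}_{h_1\ldots h_r,k_1\ldots k_l}=\eta^{-}_{h_1\ldots h_r,k_1\ldots k_l}$ for every pair of effects of $F$), and that $\mathrm{trace}(C)$ attains its maximum (equivalently, by Lemma~\ref{lem2}, $n_p(h_1\ldots h_r)=2$ for every effective choice set, since here $m=4$ is even). Because all effects outside $F$ are assumed null, $C=(1/2^n)B\Lambda B'$ with $B$ the contrast matrix of $F$ alone, so these two conditions are exactly what optimality requires; moreover, once they hold, $C$ is diagonal with strictly positive diagonal, hence positive definite, so the constructed $d^*$ indeed lies in the connected class $\mathcal D_{2^\alpha,n,4}$.

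For the construction I would take the $2^\alpha\times 2^\alpha$ matrix $A$ of (\ref{matrix-A}), select $n$ of its columns $y_1,\ldots,y_n\in\{0,1\}^\alpha$ as the factors, and let $A_1$ be the resulting $2^\alpha\times n$ base matrix (whose rows are indexed by $x\in\{0,1\}^\alpha$). I would then form the four option matrices by a foldover together with a single flip of the specified factor, namely $A_2=\bar A_1$, $A_3=A_1+g$, $A_4=\bar A_1+g$ with $g=(1,0,\ldots,0)$ the generator that complements only factor~$1$; this gives $d^*=(A_1,A_2,A_3,A_4)$ with $2^\alpha$ choice sets of size $4$. The reason for this particular choice becomes visible on recording, for a fixed effect $F_{h_1\ldots h_r}$ with factor set $\Theta$, the four effective positions inside a choice set built from a base row $u$: complementing shifts the effective position by $p:=|\Theta|$ modulo $2$, while adding $g$ shifts it by $q:=[\,1\in\Theta\,]$, so the four values form the coset $i^{*}_{\Theta}(u)+\{0,p,q,p+q\}$ in $\mathbb Z_2$, which is $2$-$2$ balanced precisely when $(p,q)\neq(0,0)$. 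Every main effect and every odd-order interaction of $F$ gives $p=1$; the remaining even-order effects of $F$ all contain factor~$1$, for which $g$ forces $q=1$. Hence $n_p(h_1\ldots h_r)=2$ for every effect of $F$ and every choice set, establishing condition~(ii).

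For condition~(i) I would use the fact that, once every $n_p$ equals $2$, each within-choice-set sum $\sum_{w\in S_p}b^{(w)}_{\Theta}$ and $\sum_{w\in S_p}b^{(w)}_{\Theta'}$ vanishes, so the off-diagonal entry indexed by two effects of $F$ with factor sets $\Theta,\Theta'$ is proportional to $\sum_{w}b^{(w)}_{\Theta\triangle\Theta'}$, the contrast of the symmetric-difference effect summed over all treatments occurring in $d^*$; by Lemma~\ref{lem1} this is exactly $4(\eta^{+}-\eta^{-})$. The foldover block $\{u,\bar u,u+g,\overline{u+g}\}$ attached to each base row already cancels this sum whenever $\Theta\triangle\Theta'$ has odd order or contains factor~$1$, so the only surviving cases are the even-order symmetric differences contained in $\{2,\ldots,n\}$. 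For such a set $\Psi$ the property recorded after (\ref{matrix-A}) identifies its effective column with the single column $Y_\Psi=\bigoplus_{h\in\Psi}y_h$ of $A$, and the base-row sum becomes the Hadamard character sum $\sum_{x}(-1)^{x\cdot Y_\Psi}$, which vanishes exactly when $Y_\Psi\neq 0$. Thus condition~(i) reduces to choosing the $n$ factor-columns so that every $Y_{\Theta\triangle\Theta'}$ arising from a pair of effects of $F$ is nonzero; taking the columns distinct and nonzero suffices, and the range $2^{\alpha-1}<n\le 2^\alpha$ is precisely where $n$ such columns can be drawn from $A$ while forcing $2^\alpha$ to be the required Hadamard order.

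The crux of the argument, and the step that must be engineered rather than computed, is the simultaneous handling of the even-order interaction effects involving the specified factor. These are exactly the effects that a pure foldover fails to balance inside a choice set, so the auxiliary generator $g$ has to be tuned to rescue condition~(ii) for them; and one must then check that this same $g$ does not reintroduce a nonzero cross term in condition~(i). Verifying that the single generator $g=(1,0,\ldots,0)$ accomplishes both at once, balance within every choice set and preservation of the character-sum cancellation, is the main obstacle; the rest is the routine bookkeeping of the three possible shapes of $\Theta\triangle\Theta'$ together with the orthogonality of distinct columns of $A$.
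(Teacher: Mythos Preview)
Your construction matches the paper's, your verification of condition~(ii) via the parity pair $(p,q)=(|\Theta|\bmod 2,[1\in\Theta])$ is correct, and your reduction of the off-diagonal entry for $\Theta,\Theta'\in N_F$ to the character sum $\sum_{u}(-1)^{u\cdot Y_{\Theta\triangle\Theta'}}$ is valid and in fact more transparent than the paper's type-1/type-2 case split. The gap is in your closing claim that ``taking the columns distinct and nonzero suffices''.

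After the foldover-and-$g$ cancellation, the symmetric differences $\Psi=\Theta\triangle\Theta'$ that survive are precisely the nonempty even subsets of $\{2,\ldots,n\}$ (take $\Theta=\{1\}\cup\Psi$ and $\Theta'=\{1\}$, both in $N_F$), so you need $\bigoplus_{h\in\Psi}y_h\neq 0$ for \emph{every} such $\Psi$; equivalently $y_2,\ldots,y_n$ must admit no even-weight linear relation over $\mathbb F_2$. Distinct nonzero columns do not guarantee this: with $\alpha=3$ and, say, $y_a=001$, $y_b=010$, $y_c=100$, $y_d=111$ one has $y_a\oplus y_b\oplus y_c\oplus y_d=0$; taking $\Theta=\{1,a,b\}$, $\Theta'=\{1,c,d\}$ gives $\Psi=\{a,b,c,d\}$, even, $1\notin\Psi$, $Y_\Psi=0$, and a direct count shows each choice set contributes $16$ to $B_\Theta\Lambda^*B'_{\Theta'}$, so $C$ is not diagonal. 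Worse, once $n-1>\alpha+1$ the kernel of the linear map $e_h\mapsto y_h$ has dimension at least $2$ and therefore contains a nonzero even-weight vector for \emph{every} choice of distinct columns; and for $n=2^\alpha$ there are not even $n$ distinct nonzero columns to pick. Thus the argument as written does not establish condition~(i) over the stated range. The paper's own proof has the same blind spot: its Case-2 asserts that equal effective columns force ``type-2'', but the pair $\Theta,\Theta'$ above has equal effective columns \emph{and} identical $(p,q)$-signature $(0,1)$, which produces ``type-1'' with $w_1=w_2$ and hence a nonzero cross term.
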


\begin{proof}
	Let $A_1$ is a $2^{\alpha} \times n$ matrix is derived from (\ref{matrix-A}) by taking any $n$ columns of $A$ (including the first column), $2^{\alpha - 1} <n \leq 2^{\alpha}$, $\alpha \geq 2$. Consider  $d^* = (A_1, A_2, A_3, A_4)$, where $A_2 = \bar{A}_1$, $A_3 = A_1 + g$ and $A_4 = A_2 + g$,  with $g = (100...0)$.
	\begin{flushleft}
		Claim: $d^*$ is optimal in $\mathcal{D}_{2^{\alpha}, n, 4}$ for estimating $F$. 
	\end{flushleft}   
	Note that every effective column ${\mathcal C}_{A_1}(h_1 \ldots h_r)$ changes twice in $d^*$.  Therefore $n_{p}(h_1 \ldots h_r) = 2$, for every $S_{p}(h_1 \ldots h_r)$ in $d^*$, $p = 1,\ldots, 2^{\alpha}$, $h_1 \ldots h_r \in N_F$, and hence $trace(C)$ is maximum. 
	
	Now we only need to show that $C$ is diagonal. Let $w_1 = i^*_{h_1 \ldots h_r}$ and $w_2 = i^*_{k_1 \ldots k_l}$ are the $p$-th elements of ${\mathcal C}_{A_1}(h_1 \ldots h_r)$ and ${\mathcal C}_{A_1}(k_1 \ldots k_l)$ respectively. Since every effective column of $A_1$ changes exactly two times in $d^*$, then each effective choice set $S_{p}(h_1 \ldots h_r, k_1 \ldots k_l)$ of $d^*$ is any of the following two types
	\begin{description}
		\item[\it type-1:] $S_{p}(h_1 \ldots h_r, k_1 \ldots k_l) \equiv (w_1w_2, w_1w_2, \bar{w}_1\bar{w}_2, \bar{w}_1\bar{w}_2 )_{h_1 \ldots h_r, k_1 \ldots k_l}$
		\item[\it type-2:] $S_{p}(h_1 \ldots h_r, k_1 \ldots k_l) \equiv (w_1w_2, \bar{w}_1\bar{w}_2, w_1\bar{w}_2, \bar{w}_1w_2)_{h_1 \ldots h_r, k_1 \ldots k_l}$ 
	\end{description}
	Note that if $S_{p}(h_1 \ldots h_r, k_1 \ldots k_l)$ is of {\it type-1}, then $(00,11)_{h_1 \ldots h_r, k_1 \ldots k_l}$ occurs twice when $w_1=w_2$ and $(01,10)_{h_1 \ldots h_r, k_1 \ldots k_l}$ occurs twice when $w_1\neq w_2$ in $S_{p}(h_1 \ldots h_r, k_1 \ldots k_l)$. Similarly, if $S_{p}(h_1 \ldots h_r, k_1 \ldots k_l)$ is of {\it type-2}, then both $(00,11)_{h_1 \ldots h_r, k_1 \ldots k_l}$ and $(01,10)_{h_1 \ldots h_r, k_1 \ldots k_l}$ occur once each in $S_{p}(h_1 \ldots h_r, k_1 \ldots k_l)$. Now we have the following two cases

	\begin{description}
		\item[\it Case-1:] If  ${\mathcal C}_{A_1}(h_1 \ldots h_r)$ and ${\mathcal C}_{A_1}(k_1 \ldots k_l)$ are different, then there are equal number of effective choice sets $S_{p}(h_1 \ldots h_r, k_1 \ldots k_l)$ in $d^*$ for which $w_1 = w_2$ and $w_1 \neq w_2$. Therefore whether it is of {\it type-1} or {\it type-2}, $\eta_{h_1 \ldots h_r, k_1 \ldots k_l}^{+} = \eta_{h_1 \ldots h_r, k_1 \ldots k_l}^{-}$, for all $h_1 \ldots h_r, k_1 \ldots k_l \in N_F$.  
		
		\item [\it Case-2:] If ${\mathcal C}_{A_1}(h_1 \ldots h_r)$ and ${\mathcal C}_{A_1}(k_1 \ldots k_l)$ are same, then $S_{p}(h_1 \ldots h_r, k_1 \ldots k_l)$ is of {\it type-2}. Therefore  $\eta_{h_1 \ldots h_r, k_1 \ldots k_l}^{+} = \eta_{h_1 \ldots h_r, k_1 \ldots k_l}^{-}$, for all $h_1 \ldots h_r, k_1 \ldots k_l \in N_F$.  
	\end{description}
	Thus $C$ is a equal diagonal matrix with maximum $trace$ and hence $d^*$ is optimal in $\mathcal{D}_{2^{\alpha}, n, 4}$ for estimating $F$.   
\end{proof}

\begin{corollary}\label{cor-genAll}
	Let $F^* \subset F$, then $d^*$ is also optimal in  $\mathcal{D}_{2^{\alpha}, n, 4}$ for estimating $F^*$. 
\end{corollary}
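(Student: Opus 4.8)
The plan is to show that the two universal-optimality conditions which Theorem \ref{th-speAll} establishes for the full set $F$ are automatically inherited by every subset $F^*\subset F$, so that no fresh computation is needed. First I would record precisely what the proof of Theorem \ref{th-speAll} yields for $d^*$: by Lemma \ref{lem1} the off-diagonal balance $\eta^{+}_{h_1\ldots h_r,\,k_1\ldots k_l}=\eta^{-}_{h_1\ldots h_r,\,k_1\ldots k_l}$ holds for \emph{all} pairs $h_1\ldots h_r,\,k_1\ldots k_l\in N_F$, and by Lemma \ref{lem2} the per-effect trace condition $n_p(h_1\ldots h_r)=2$ (since $m=4$ is even) holds for every effective choice set $S_p(h_1\ldots h_r)$ with $h_1\ldots h_r\in N_F$. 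Consequently the information matrix of $d^*$ for $F$ is $C=(1/2^n)I_Q$, a scalar multiple of the identity whose every diagonal entry attains its maximal value $1/2^n$.

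Next I would pass to $F^*$. Since $N_{F^*}\subseteq N_F$, both of the above statements continue to hold for every index and every pair of indices drawn from $N_{F^*}$; they are merely restrictions of assertions already proved over the larger index set. Because the optimality criterion recorded after Lemma \ref{lem2} is phrased effect-by-effect and pair-by-pair, its validity on $N_F$ forces its validity on any sub-collection $N_{F^*}$, with no interaction between the discarded and retained effects.

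Finally I would translate this back to the matrix level. Writing $Q^*=|F^*|$, the information matrix $C_{F^*}$ for estimating $F^*$ is obtained from $C$ by retaining the rows and columns indexed by $F^*$. Because $C$ is diagonal, the cross-blocks coupling $F^*$ with $F\setminus F^*$ vanish, so $C_{F^*}$ equals the corresponding principal submatrix $(1/2^n)I_{Q^*}$ with no Schur-complement correction; it is again a scalar multiple of the identity, and its trace $Q^*/2^n$ meets exactly the upper bound of Lemma \ref{lem2} applied with $F$ replaced by $F^*$. Hence $d^*$ satisfies both optimality conditions for $F^*$ and is therefore universally optimal in $\mathcal{D}_{2^{\alpha},n,4}$ for estimating $F^*$.

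The single point deserving care — which I would flag as the only (mild) obstacle — is the justification that $C_{F^*}$ is literally the principal submatrix of $C$ rather than a Schur complement of a larger matrix; this is exactly where the diagonality of $C$ guaranteed by Theorem \ref{th-speAll} is essential, since it drives the cross-blocks to zero and eliminates any confounding between the kept and dropped effects. Everything else is a direct specialization of the already-proved conditions to a smaller index set.
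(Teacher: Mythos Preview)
Your argument is correct and is exactly the natural justification the paper leaves implicit (the paper states the corollary without proof). The restriction of the per-effect balance $n_p(\cdot)=2$ and the pairwise balance $\eta^+=\eta^-$ from $N_F$ to $N_{F^*}$ is immediate, and that is all that is needed.

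One small simplification: your final paragraph about the Schur complement is unnecessary in this paper's framework. By the convention of Section~4, ``estimating $F^*$'' means working in the model where all effects outside $F^*$ are assumed zero; hence $C_{F^*}=(1/2^n)B_{F^*}\Lambda B_{F^*}'$ is \emph{by definition} the principal submatrix of $C$ indexed by $F^*$, regardless of whether $C$ is diagonal. Diagonality of $C$ is used only to conclude that this submatrix is itself a scalar multiple of the identity, not to eliminate any cross-block correction. Your argument remains valid either way, but the ``mild obstacle'' you flag does not actually arise.
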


\begin{theorem}\label{th-speAll-3}
	For $2^{\alpha - 1} <n \leq 2^{\alpha}$, $\alpha \geq 2$, there exists an optimal design $d^*$ in $\mathcal{D}_{2^{\alpha+1}, n, 3}$ for estimating $F$.
\end{theorem}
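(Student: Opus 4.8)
The plan is to mirror the construction of Theorem \ref{th-speAll}, replacing its four-point blocks by three-point blocks and compensating both for the lost point and for the odd block size $m=3$ by passing to the complement design. Concretely, I would start from the matrix $A$ of (\ref{matrix-A}), let $A_1$ consist of $n$ columns of $A$ including the first (so that the specified factor is carried by the all-ones column), and build a base design $d$ with $2^{\alpha}$ choice sets of size $3$ out of $A_1$, its complement $\bar A_1$, and the shift $A_1+g$ by the generator $g=(10\ldots0)$ that toggles only the specified factor. I would then take $d^*=\{\,d,\bar d\,\}$, which has $2^{\alpha+1}$ choice sets of size $3$ and lies in $\mathcal D_{2^{\alpha+1},n,3}$. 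The whole argument is driven by two parity bookkeeping rules: adding $g$ flips the effective position $i^*_{h_1\ldots h_r}$ exactly when the specified factor lies in $\{h_1,\ldots,h_r\}$, whereas complementation flips $i^*_{h_1\ldots h_r}$ exactly when $r$ is odd.

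First I would establish the trace condition of Lemma \ref{lem2}. For each $F_{h_1\ldots h_r}\in F$ I would use the two flip rules to check that the three effective entries produced by $A_1,\bar A_1,A_1+g$ are never all equal, so that every effective choice set of the base design is non-monochromatic; since complementation turns a set with one zero into one with two zeros and conversely, each effective choice set $S_p(h_1\ldots h_r)$ of $d^*$ then has $n_p(h_1\ldots h_r)\in\{1,2\}=\{(m-1)/2,(m+1)/2\}$, which is precisely the trace-maximising configuration for $m=3$.

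Next I would prove diagonality, i.e. $\eta^{+}_{h_1\ldots h_r,k_1\ldots k_l}=\eta^{-}_{h_1\ldots h_r,k_1\ldots k_l}$ for every pair of effects in $F$. The key computational fact, read off from the contrast definitions as in Lemma \ref{lem-Mij}, is that replacing $(T_i,T_j)$ by $(\bar T_i,\bar T_j)$ multiplies $B_{h_1\ldots h_r}M^{(ij)}B'_{k_1\ldots k_l}$ by $(-1)^{r+l}$. Hence for a pair with $r+l$ odd the complement block $\bar d$ converts every $(00,11)$ effective pair of $d$ into a $(01,10)$ pair and vice versa, so $\eta^{+}=\eta^{-}$ holds automatically in $d^*$. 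For $r+l$ even the complement does not help, so I would instead classify the effective choice sets of the base design into a few patterns (as with the type-1/type-2 split of Theorem \ref{th-speAll}) and sum the signed contributions over the $2^{\alpha}$ rows. Since each effective column of $A_1$ is a column of the Hadamard matrix $A$, two distinct effective columns are orthogonal as $\pm1$ vectors, so they agree on as many rows as they disagree and the signed contributions cancel, giving $\eta^{+}=\eta^{-}$.

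The main obstacle is exactly the even-$r+l$ case in which the two effective columns coincide, i.e. aliasing of effective columns inside $F$ (for instance $F_{h_1\ldots h_r}$ and $F_{k_1\ldots k_l}$ producing the same column of $A$). There the Hadamard cancellation is unavailable, the complement supplies no sign change, and a three-point block cannot by itself realise all four joint level combinations the way the four-point block of Theorem \ref{th-speAll} does. I expect the crux to be showing that the triple $\{A_1,\bar A_1,A_1+g\}$ together with its complement still balances these aliased even-order pairs, either by verifying that for effects of $F$ such a coincidence forces the flip bits $[\,1\in\{h_1,\ldots,h_r\}\,]$ and $[\,1\in\{k_1,\ldots,k_l\}\,]$ to differ, so that the point $A_1+g$ furnishes the missing combination, or, if that fails, by restricting the $n$ chosen columns of $A$ so that no two effects of $F$ share an effective column. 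Closing this bookkeeping is where the real work lies; the trace condition and the odd-$r+l$ pairs are routine once the two flip rules are in hand.
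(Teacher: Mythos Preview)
Your construction $d^*=\{d,\bar d\}$ with $d=(A_1,\bar A_1,A_1+g)$ and $g=(10\ldots0)$ is exactly the one the paper uses, and your parity bookkeeping (the two flip rules, the split into $r+l$ odd versus even, and the Hadamard--orthogonality cancellation for distinct effective columns) is a clean repackaging of the paper's classification of effective choice sets into five ``types'' and its Case~1/Case~2 dichotomy. The trace argument and the two easy diagonality cases go through just as you describe and match the paper line for line.

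The obstacle you isolate is precisely the paper's Case~2, and the paper disposes of it by asserting---without further justification---that whenever the two effective columns in $A_1$ coincide, the resulting effective choice set can only be of its types~2,~3 or~4, after which the complement $\bar d$ swaps the $(00,11)$ and $(01,10)$ contributions. In your language, types~2,~3,~4 are exactly the situations in which the flip/parity pattern $(a,b)=\bigl([\,r\ \text{odd}\,],[\,1\in\{h_i\}\,]\bigr)$ differs between the two effects, so the paper's assertion \emph{is} your option~(a). Your instinct that this needs checking is well placed: writing each effect of $F$ as $F_{1^{\epsilon}S}$ with $S\subseteq\{2,\ldots,n\}$, one has $(a,b)=(\epsilon+|S|\bmod 2,\ \epsilon)$ while the effective column in $A_1$ depends only on $Q_S=\sum_{s\in S}q_s\in\mathbb{F}_2^{\alpha}$. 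For $\alpha=2$ the three nonzero $q_s$ admit no nonempty even-size zero-sum subset, so any two aliased effects in $F$ necessarily have different $(a,b)$ and the paper's Case~2 goes through. For $\alpha\ge 3$, however, with $n=2^{\alpha}$ one may take $S=\{s\}$ and $S'=\{s_1,s_2,s_3\}$ with $q_s=q_{s_1}+q_{s_2}+q_{s_3}$: then $F_{1S},F_{1S'}\in F$ share an effective column and have identical $(a,b)=(0,1)$; the resulting type-1 block contributes only $(00,11)$ pairs and is reproduced unchanged in $\bar d$ (both orders are even), so the corresponding off-diagonal entry of $C$ is nonzero. Your option~(b) is likewise blocked once $n>\alpha+1$, since avoiding all aliasing among the $F_{1S}$ would force the $2^{n-1}$ values $Q_S$ to be distinct in $\mathbb{F}_2^{\alpha}$. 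In short, the gap you flag is genuine, and the paper's Case~2 argument resolves it only for $\alpha=2$.
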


\begin{proof}
	Let $A_1$ is a $2^{\alpha} \times n$ matrix is derived from (\ref{matrix-A}) by taking any $n$ columns of $A$ (including the first column), $2^{\alpha - 1} <n \leq 2^{\alpha}$. Let $d = (A_1, A_2, A_3)$, where $A_2 = \bar{A}_1$, $A_3 = A_1 + g$ with $g = (100...0)$. Consider $d^* =\{\hspace{.1 cm} d, \bar{d} \hspace{.1 cm} \}$. 
	\begin{flushleft}
		Claim: $d^*$ is optimal in $\mathcal{D}_{2^{\alpha +1}, n, 3}$ for estimating $F$. 
	\end{flushleft}   
	
	Note that every effective column ${\mathcal C}_{A_1}(h_1 \ldots h_r)$ changes either once or twice in $d^*$. Therefore $n_{p}(h_1 \ldots h_r) = 1$ or $n_{p}(h_1 \ldots h_r) = 2$, for every $S_{p}(h_1 \ldots h_r)$ in $d^*$, $p = 1,\ldots, 2^{\alpha+1}$, $h_1 \ldots h_r \in N_F$, and hence $trace(C)$ is maximum. 
	
	Now we only need to show that $C$ is diagonal. Let $w_1 = i^*_{h_1 \ldots h_r}$ and $w_2 = i^*_{k_1 \ldots k_l}$ are the $p$-th elements of ${\mathcal C}_{A_1}(h_1 \ldots h_r)$ and ${\mathcal C}_{A_1}(k_1 \ldots k_l)$ respectively. Since every effective column of $A_1$ changes either once or twice in $d^*$, then  each effective choice set $S_{p}(h_1 \ldots h_r, k_1 \ldots k_l)$ of $d^*$ is any of the following types
	\begin{description}
		\item When both ${\mathcal C}_{A_1}(h_1 \ldots h_r)$ and ${\mathcal C}_{A_1}(k_1 \ldots k_l)$ change once 
		
		{\it type-1:} $ S_{p}(h_1 \ldots h_r, k_1 \ldots k_l) \equiv (w_1w_2, w_1w_2, \bar{w}_1\bar{w}_2)_{h_1 \ldots h_r, k_1 \ldots k_l}$
		
		{\it type-2:} $S_{p}(h_1 \ldots h_r, k_1 \ldots k_l) \equiv (w_1w_2, w_1\bar{w}_2, \bar{w}_1w_2)_{h_1 \ldots h_r, k_1 \ldots k_l}$ 
		
		\item When ${\mathcal C}_{A_1}(h_1 \ldots h_r)$  changes once and ${\mathcal C}_{A_1}(k_1 \ldots k_l)$  changes twice or vice-versa 
		
		{\it type-3:} $ S_{p}(h_1 \ldots h_r, k_1 \ldots k_l) \equiv (w_1w_2, w_1\bar{w}_2, \bar{w}_1\bar{w}_2)_{h_1 \ldots h_r, k_1 \ldots k_l}$
		
		{\it type-4:} $S_{p}(h_1 \ldots h_r, k_1 \ldots k_l) \equiv (w_1w_2, \bar{w}_1w_2, \bar{w}_1\bar{w}_2)_{h_1 \ldots h_r, k_1 \ldots k_l}$ 
		
		\item When both ${\mathcal C}_{A_1}(h_1 \ldots h_r)$ and ${\mathcal C}_{A_1}(k_1 \ldots k_l)$  change twice
		
		{\it type-5:} $ S_{p}(h_1 \ldots h_r, k_1 \ldots k_l) \equiv (w_1w_2, \bar{w}_1\bar{w}_2, \bar{w}_1\bar{w}_2)_{h_1 \ldots h_r, k_1 \ldots k_l}$
		
	\end{description}
	Now we have the following two cases.
	
	\begin{description}
		\item[\it Case-1:] If ${\mathcal C}^{h_1 \ldots h_r}_{A_1}$ and ${\mathcal C}^{k_1 \ldots k_l}_{A_1}$ are different, then there are equal number of effective choice sets  $S_{p}(h_1 \ldots h_r, k_1 \ldots k_l)$ in $d^*$, for which $w_1 = w_2$ and $w_1 \neq w_2$. From any of the above five types we see that if an effective choice set with $w_1 = w_2$ contains $(00,11)_{h_1 \ldots h_r, k_1 \ldots k_l}$,  then an effective choice set with $w_1 \neq w_2$ contains $(01,10)_{h_1 \ldots h_r, k_1 \ldots k_l}$.    
		
		\item [\it Case-2:] If ${\mathcal C}^{h_1 \ldots h_r}_{A_1}$ and ${\mathcal C}^{k_1 \ldots k_l}_{A_1}$ are same, then every effective choice set $S_{p}(h_1 \ldots h_r, k_1 \ldots k_l)$ of $d^*$ is either of {\it type-2} or {\it type-3} or {\it type-4}. Note that if an effective choice set in $d$ is of {\it type-2}, then the corresponding choice set of $\bar{d}$  is of {\it type-3} or {\it type-4} and vice-versa. Thus if an effective choice set of $d$ contains $(00,11)_{h_1 \ldots h_r, k_1 \ldots k_l}$, then the corresponding effective choice set of $\bar{d}$ contains $(01,10)_{h_1 \ldots h_r, k_1 \ldots k_l}$ and vice-versa.   
	\end{description}
	
	Considering both the above cases we see that $\eta_{h_1 \ldots h_r, k_1 \ldots k_l}^{+} = \eta_{h_1 \ldots h_r, k_1 \ldots k_l}^{-}$, for all $h_1 \ldots h_r$, $k_1 \ldots k_l \in N_F$. Thus $C$ is a equal diagonal matrix with maximum $trace$ and hence $d^*$ is optimal in $\mathcal{D}_{2^{\alpha+1}, n, 3}$ for estimating $F$.   
\end{proof}

\begin{corollary}\label{cor-genAll-3}
	Let $F^* \subset F$, then $d^*$ is also optimal in  $\mathcal{D}_{2^{\alpha+1}, n, 3}$ for estimating $F^*$. 
\end{corollary}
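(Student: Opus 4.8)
The plan is to exploit the fact that the two sufficient conditions for universal optimality, as distilled from Lemma \ref{lem1} and Lemma \ref{lem2}, are stated \emph{effect-by-effect} and \emph{pair-by-pair} over the index set, and therefore are inherited automatically by any sub-collection of effects. Thus no fresh combinatorial counting is required; the corollary is a pure restriction argument built on top of what Theorem \ref{th-speAll-3} has already verified for the full set $F$.

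First I would recall the characterisation: a design in $\mathcal{D}_{2^{\alpha+1}, n, 3}$ is optimal for a family of effects with index set $N_{F}$ exactly when (i) $\eta_{h_1 \ldots h_r, k_1 \ldots k_l}^{+} = \eta_{h_1 \ldots h_r, k_1 \ldots k_l}^{-}$ for every pair of indices in $N_F$, and (ii) $n_p(h_1 \ldots h_r)$ attains its balanced value ($1$ or $2$, since $m=3$) for every effective choice set $S_p(h_1 \ldots h_r)$ with $h_1 \ldots h_r \in N_F$. Since $F^* \subset F$ yields $N_{F^*} \subset N_F$, both conditions, having been established in the proof of Theorem \ref{th-speAll-3} for all indices and all pairs in $N_F$, hold a fortiori for the smaller collection $N_{F^*}$. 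In matrix language, the information matrix $C^*$ for $F^*$ is precisely the principal submatrix of $C$ indexed by the rows of $B$ corresponding to $F^*$; because $C$ is a scalar multiple of the identity with each diagonal entry equal to the maximal value $(m^2-1)/(2^n m^2)$, the submatrix $C^*$ is again a scalar multiple of the identity, and $trace(C^*) = Q^*(m^2-1)/(2^n m^2)$ with $Q^* = |F^*|$, which is exactly the maximum allowed by Lemma \ref{lem2} for a set of $Q^*$ effects. Hence $d^*$ satisfies both optimality conditions for $F^*$.

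There is essentially no obstacle here: the argument is a monotonicity/restriction observation, and the only point needing a line of care is that the attainable maximum trace scales linearly in the number of effects, so that discarding effects from $F$ lowers the target maximum by exactly the right amount; this is immediate from the closed form of $\max(trace(C))$ in Lemma \ref{lem2}. I would also note, as a robustness remark, that since $B_F\Lambda B_F'$ is diagonal for $d^*$, all off-diagonal blocks $B_{F^*}\Lambda B_{F\setminus F^*}'$ vanish, so it makes no difference whether the effects in $F\setminus F^*$ are treated as null or as nuisance parameters: in either reading the information matrix for $F^*$ is the same principal submatrix, and the conclusion stands.
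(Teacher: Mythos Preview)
Your argument is correct. The paper states this corollary without proof, treating it as an immediate consequence of Theorem \ref{th-speAll-3}; your restriction argument---that the sufficient conditions of Lemmas \ref{lem1} and \ref{lem2} are verified index-by-index and hence pass to any subset $N_{F^*} \subset N_F$, so the information matrix for $F^*$ is the corresponding principal submatrix of the scalar-identity $C$---is precisely the justification the paper leaves implicit.
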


The results of Corollary \ref{cor-genAll} and Corollary \ref{cor-genAll-3} can be further improved in terms of lesser number of choice sets if $F^*$ contains all the main effects and all the specified two factor interaction effects with one factor, i.e., $F^* = \{F_1,\ldots,F_n,F_1F_2,\ldots,F_1F_n \}$.  

\begin{theorem}
	For given $n$, let $\nu$ be the least number such that a Hadamard matrix of order $\nu$ exists. Then there exists an optimal design $d^*$ in $\mathcal{D}_{\nu, n, 4}$ for estimating $F^*$.   
\end{theorem}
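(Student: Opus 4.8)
The plan is to reuse the four-profile generator construction of Theorem \ref{th-speAll}, but now seeded by a Hadamard matrix of the (possibly non power-of-two) order $\nu$. First I would let $H$ be a Hadamard matrix of order $\nu$, form the $\nu \times n$ matrix $A_1$ from any $n$ columns of $H$ (one of them, taken as the first column of $A_1$, serving as the specified factor $f_1$) with $-1$ replaced by $0$, and set $d^* = (A_1, A_2, A_3, A_4)$ with $A_2 = \bar{A}_1$, $A_3 = A_1 + g$, $A_4 = \bar{A}_1 + g$ and $g = (10\cdots 0)$, exactly as in Theorem \ref{th-speAll}. The task is then to verify the two optimality conditions — maximal trace and diagonality of $C$ — for the reduced effect set $F^* = \{F_1,\dots,F_n,F_{12},\dots,F_{1n}\}$.

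For the trace I would track each effective contrast column across the four components. Writing $H_h$ for the $h$-th selected column of $H$ as a $\pm1$ vector, complementation flips every sign while $g$ flips only the first column, so over $(A_1,A_2,A_3,A_4)$ the contrast of a non-specified main effect $F_h$ ($h\neq1$) is $(+,-,+,-)H_h$, that of the specified main effect $F_1$ is $(+,-,-,+)H_1$, and that of an interaction $F_{1h}$ is $(+,+,-,-)H_1H_h$. In every case each choice set contains exactly two $+1$'s and two $-1$'s, i.e. $n_p(h_1\ldots h_r)=2=m/2$ for all $p$ and all effects in $F^*$; by Lemma \ref{lem2} this gives maximal trace.

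For diagonality I would exploit the fact that $n_p=2$ forces the within-set contrast sum to vanish. Using the elementary identity $\sum_{i<j}(x_i-x_j)(y_i-y_j) = m\sum_i x_iy_i - (\sum_i x_i)(\sum_i y_i)$ together with Lemma \ref{lem-Mij}, the off-diagonal entry $c^*_{qq'}$ collapses to $4\sum x_iy_i$ summed over all profiles; by Lemma \ref{lem1} this means $\eta^{+}_{h_1\ldots h_r,k_1\ldots k_l}-\eta^{-}_{h_1\ldots h_r,k_1\ldots k_l}$ equals the full inner product of the two contrast vectors over the profiles of $d^*$. Because each contrast factors as a sign pattern on the four components times a base $\pm1$ vector, this inner product factors as $\big(\sum_{i=1}^4 s^{(X)}_i s^{(Y)}_i\big)\langle U_X,U_Y\rangle$. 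The three sign patterns above are distinct characters of $\mathbb{Z}_2^2$, hence mutually orthogonal, so whenever two effects carry different patterns the character factor is $0$ and the entry vanishes automatically. The only same-pattern pairs are two non-specified main effects and two interactions; for these the base inner products are $\langle H_h,H_{h'}\rangle$ and $\langle H_1H_h,H_1H_{h'}\rangle$, which I must show vanish.

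The step I expect to be the main obstacle is exactly this interaction–interaction case, because for a general (non-Sylvester) $H$ the product $H_1H_h$ of two columns need not itself be a column of $H$, so I cannot appeal to orthogonality of $H$'s columns directly. The resolution — and the reason the theorem restricts interactions to a single common factor — is that the shared column squares out, $(H_1)_i^2=1$, reducing $\langle H_1H_h,H_1H_{h'}\rangle$ to the genuine Hadamard orthogonality $\langle H_h,H_{h'}\rangle=0$. Once this is in hand, diagonality holds for every pair in $F^*$, and together with maximal trace it yields a $C$ that is a positive scalar multiple of the identity, so $d^*$ is universally optimal in $\mathcal{D}_{\nu,n,4}$ for $F^*$.
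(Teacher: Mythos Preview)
Your proposal is correct. The construction you give --- $d^*=(A_1,\bar A_1,A_1+g,\bar A_1+g)$ with $g=(10\cdots0)$ and $A_1$ built from $n$ columns of a general Hadamard matrix of order $\nu$ --- is exactly what the paper intends when it says ``the proof follows the same way as Theorem \ref{th-speAll}, by taking $H$ to be a normalized Hadamard matrix of order $\nu$.'' Your verification of the two optimality conditions is, however, organised differently from the paper's proof of Theorem \ref{th-speAll}. There the diagonality argument classifies each effective choice set $S_p(h_1\ldots h_r,k_1\ldots k_l)$ as \emph{type-1} (the two effective columns share the same sign pattern across $A_1,\ldots,A_4$) or \emph{type-2} (different patterns); type-2 sets are internally balanced, while for type-1 sets one needs exactly half the rows to have $w_1=w_2$, i.e.\ orthogonality of the two effective columns in $A_1$. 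You replace this combinatorial dichotomy with the identity $\sum_{i<j}(x_i-x_j)(y_i-y_j)=m\sum_i x_iy_i-(\sum_i x_i)(\sum_i y_i)$ together with the observation that the three sign patterns $(+,-,+,-)$, $(+,-,-,+)$, $(+,+,-,-)$ are mutually orthogonal characters of $\mathbb{Z}_2^2$, so cross-pattern terms vanish automatically and same-pattern terms reduce to an inner product of base vectors. Both arguments land on the same residual obligation --- orthogonality of $H_h,H_{h'}$ for main--main pairs and of $H_1H_h,H_1H_{h'}$ for interaction--interaction pairs --- and your remark that the shared $H_1$ squares out is precisely why the theorem can afford a general (non-Sylvester) Hadamard matrix once interactions are restricted to a single common factor. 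Your algebraic packaging is tidier and makes the role of the restriction to $F^*$ more transparent; the paper's type-1/type-2 bookkeeping has the advantage of reusing verbatim the machinery already set up for Theorem \ref{th-speAll}.
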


\begin{proof}
	The proof follows same way as of Theorem \ref{th-speAll}, by taking  $H$ of (\ref{Hadmard-H}) to be a normalized Hadamard matrix of order $\nu$. 
\end{proof}

\begin{theorem}
	For given $n$, let $\nu$ be the least number such that a Hadamard matrix of order $\nu$ exists. Then there exists an optimal design $d^*$ in $\mathcal{D}_{2\nu, n, 3}$ for estimating $F^*$.   
\end{theorem}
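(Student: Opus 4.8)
The plan is to transplant the construction and argument of Theorem~\ref{th-speAll-3} to a general Hadamard order. Let $H$ be a \emph{normalized} Hadamard matrix of order $\nu$ and let $A$ be the $(0,1)$-matrix obtained from $H$ exactly as in (\ref{matrix-A}); normalization makes the first column of $A$ the all-ones vector. Take $A_1$ to be the $\nu\times n$ submatrix formed by the first column of $A$ together with any $n-1$ other columns, and set $A_2=\bar A_1$, $A_3=A_1+g$ with $g=(10\ldots 0)$, $d=(A_1,A_2,A_3)$ and $d^*=\{\,d,\bar d\,\}$. Then $d^*$ has $2\nu$ choice sets, each of size $3$, on $n$ factors, so $d^*\in\mathcal{D}_{2\nu,n,3}$, and it remains to check the two optimality conditions (equal diagonal $C$ with maximal trace) for $F^*=\{F_1,\ldots,F_n,F_{12},\ldots,F_{1n}\}$.

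The only point at which a general $H$ behaves differently from the Kronecker matrix of (\ref{Hadmard-H}) is the property, quoted after (\ref{matrix-A}), that every effective column is a column of $A$; for the Kronecker matrix this rests on the multiplicative closure of its columns, which fails in general. The first step of the proof, and the genuinely new ingredient, is to observe that closure is needed only for the effects in $F^*$, and for those it still holds. Indeed, for a main effect $F_{h_1}$ one has $i^*_{h_1}=i_{h_1}$, so ${\mathcal C}_{A_1}(h_1)$ is simply the $h_1$-th column of $A_1$; and since the first column of $A_1$ is all-ones, $i^*_{1h_2}=3-(i_1+i_{h_2})\equiv i_{h_2}\pmod 2$, so ${\mathcal C}_{A_1}(1h_2)$ coincides with the $h_2$-th column as well. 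Thus every effective column required for $F^*$ is again a column of $A$, which is exactly why the model must be restricted to interactions with the single specified factor.

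With this in hand the remainder runs parallel to Theorem~\ref{th-speAll-3}. Passing from $A_1$ to $A_2=\bar A_1$ flips an effective column precisely for odd-order effects, and passing to $A_3=A_1+g$ flips it precisely for effects involving factor $1$; hence each effective column of $F^*$ changes once or twice in $d^*$, so every size-$3$ effective choice set has $n_p(h_1\ldots h_r)\in\{1,2\}$ and $trace(C)$ is maximal by Lemma~\ref{lem2}. For diagonality I would reuse the five-type enumeration of $S_p(h_1\ldots h_r,k_1\ldots k_l)$ and split as in Theorem~\ref{th-speAll-3}: when ${\mathcal C}_{A_1}(h_1\ldots h_r)$ and ${\mathcal C}_{A_1}(k_1\ldots k_l)$ differ, the effective choice sets with $w_1=w_2$ and with $w_1\neq w_2$ occur equally often and pair each $(00,11)$ with a $(01,10)$; when they coincide, the $d$-versus-$\bar d$ correspondence swaps type-$2$ with type-$3$/type-$4$ and again pairs $(00,11)$ with $(01,10)$. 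In both cases $\eta_{h_1\ldots h_r,k_1\ldots k_l}^{+}=\eta_{h_1\ldots h_r,k_1\ldots k_l}^{-}$ for all pairs from $F^*$, so by Lemma~\ref{lem1} the matrix $C$ is equal diagonal, and $d^*$ is universally optimal.

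The step I expect to need the most care is the coincident-column case for the pair $(F_{h_2},F_{1h_2})$. On $A_1$ these two effects have the \emph{same} effective column (column $h_2$), so they are confounded within the single block $A_1$; what separates them is that $g=(10\ldots 0)$ flips $F_{1h_2}$ but not $F_{h_2}$, while complementation in $\bar d$ flips the odd-order $F_{h_2}$ but not the even-order $F_{1h_2}$. Verifying that these two opposed symmetries force $\eta^{+}=\eta^{-}$ on this confounded pair, rather than leaving the two effects indistinguishable, is the crux, and it is precisely here that both the generator and the complementation are indispensable.
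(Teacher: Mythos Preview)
Your proposal is correct and follows exactly the route the paper intends: its own proof is the single sentence ``The proof follows same way as of Theorem~\ref{th-speAll-3}, by taking $H$ of (\ref{Hadmard-H}) to be a normalized Hadamard matrix of order $\nu$.'' You have supplied the details the paper omits --- in particular the observation that, while the multiplicative closure of the columns of $A$ may fail for a general Hadamard matrix, it is not needed: for the effects in $F^*$ the constant first column of a normalized $H$ forces ${\mathcal C}_{A_1}(h_1)$ and ${\mathcal C}_{A_1}(1h_2)$ both to be genuine columns of $A_1$, which is all the argument of Theorem~\ref{th-speAll-3} uses. Your identification of the pair $(F_{h_2},F_{1h_2})$ as the delicate case, resolved by $g$ flipping the even-order effect and complementation flipping the odd-order one, is likewise correct and makes explicit why the $\{d,\bar d\}$ doubling is essential.
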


\begin{proof}
	The proof follows same way as of Theorem \ref{th-speAll-3}, by taking  $H$ of (\ref{Hadmard-H}) to be a normalized Hadamard matrix of order $\nu$. 
\end{proof}

\begin{exmp}
	Suppose we have $n=4$ factors and we want an optimal design for $F = \{ F_1, F_2, F_3, F_4, F_{12}, F_{13}, F_{14}, F_{123}, F_{134}, F_{1234}\}$ or any subset $F^*$ of $F$. Then 
	\begin{center}
		$d^* = \begin{array}{cccc}
		(1111,&0000,&0111,&1000)\\
		(1010,&0101,&0010,&1101)\\
		(1100,&0011,&0100,&1011)\\
		(1001,&0110,&0001,&1110)
		\end{array}$
	\end{center}
	is optimal in $\mathcal{D}_{4, 4, 4}$ for estimating $F$ or any subset $F^*$ of $F$.
\end{exmp}

So far we discuss about the optimal designs for estimating main effects and specified interaction effects with one factor. We now generalized the idea to get optimal designs for estimating main effects and specified interaction effects with more than one factor. Suppose we have a partition of factors into two groups, say, $\mathbf{f_1} = \{ f_{h_1}, \ldots, f_{h_r}\}$ and $\mathbf{f_2} = \{ f_{k_1}, \ldots, f_{k_l}\}$ and we are interested in estimating all the main effects and all the specific interaction effects between each factor of the group $\mathbf{f_1}$ to all the factors of the group $\mathbf{f_2}$. Therefore the set of all factorial effects of interests are $F = \{F_{h_1}, \ldots, F_{h_r}$, $F_{k_1}, \ldots, F_{k_l}$, $F_{h_1k_1},\ldots, F_{h_1k_1\ldots k_l}$, $ F_{h_2k_1},\ldots, F_{h_rk_1\ldots k_l}\}$.   

\begin{theorem}\label{th-speAll1}
	for $2^{\alpha - 1} <n \leq 2^{\alpha}$, $\alpha \geq 2$, there exists an optimal design $d^*$ in $\mathcal{D}_{2^{\alpha}, n, 4}$ for estimating $F$.
\end{theorem}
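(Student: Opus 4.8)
The plan is to reproduce the construction and the type analysis of Theorem~\ref{th-speAll}, the only genuinely new point being the choice of generator for the full specified group $\mathbf{f_1}=\{f_{h_1},\ldots,f_{h_r}\}$. I would take $A_1$ to be a $2^{\alpha}\times n$ matrix built from $n$ suitably chosen columns of the matrix $A$ of (\ref{matrix-A}), put $A_2=\bar{A}_1$, and let $g$ be the generator whose support is exactly $\mathbf{f_1}$, so that $g$ flips every factor of $\mathbf{f_1}$ and no factor of $\mathbf{f_2}$. Setting $A_3=A_1+g$, $A_4=A_2+g$ and $d^*=(A_1,A_2,A_3,A_4)$ then gives a design in $\mathcal{D}_{2^{\alpha},n,4}$, which reduces to the design of Theorem~\ref{th-speAll} when $r=1$. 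Throughout, for a factor-index set $E$ I write $F_E$ for the associated factorial effect and ${\mathcal C}_{A_1}(E)$ for its effective column.

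First I would check that $trace(C)$ is maximal. Attach to each $F_E\in F$ the pair $(a_E,b_E)$, where $a_E=1$ iff $F_E$ has odd order and $b_E=1$ iff $E$ contains an odd number of factors of $\mathbf{f_1}$. Then, in a fixed choice set $S_p$ of $d^*$, the effective positions of $F_E$ in the four treatments coming from $A_1,A_2,A_3,A_4$ are $w,\,w+a_E,\,w+b_E,\,w+a_E+b_E$ (mod $2$), where $w$ is the row-$p$ entry of ${\mathcal C}_{A_1}(E)$. The feature of $F$ that drives the argument is that each of its effects is either a main effect (odd order, $a_E=1$) or an interaction of \emph{exactly one} factor of $\mathbf{f_1}$ with factors of $\mathbf{f_2}$ (so $b_E=1$); hence $(a_E,b_E)\neq(0,0)$ for every $F_E\in F$, and every effective column takes each of its two values exactly twice. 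Thus $n_p(h_1\ldots h_r)=2$ for every effective choice set, and $trace(C)$ is maximal by Lemma~\ref{lem2}.

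Next I would show $C$ is diagonal, following the type-1/type-2 analysis of Theorem~\ref{th-speAll}. For a pair $F_E,F_{E'}$ the four options of each $S_p(E,E')$ are governed by $(a_E,b_E)$ and $(a_{E'},b_{E'})$; since no effect of $F$ has $(a,b)=(0,0)$, a direct check gives the clean dichotomy that $S_p(E,E')$ is of \emph{type-2} exactly when $F_E,F_{E'}$ have different pairs $(a,b)$ and of \emph{type-1} exactly when their pairs coincide. Each \emph{type-2} choice set contributes equally to $\eta^{+}$ and $\eta^{-}$ of the pair, while a \emph{type-1} choice set contributes to $\eta^{+}$ when $w_1=w_2$ and to $\eta^{-}$ when $w_1\neq w_2$. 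If ${\mathcal C}_{A_1}(E)\neq{\mathcal C}_{A_1}(E')$ these are two distinct (hence orthogonal) columns of $H$, so the choice sets with $w_1=w_2$ and those with $w_1\neq w_2$ are equinumerous and $\eta^{+}=\eta^{-}$ for that pair.

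The hard part will be the one residual configuration: a \emph{type-1} pair in which $F_E$ and $F_{E'}$ have the same $(a,b)$ \emph{and} the same effective column ${\mathcal C}_{A_1}(E)={\mathcal C}_{A_1}(E')$. There $w_1=w_2$ in every row, so the pair would give $\eta^{+}>\eta^{-}=0$ and break diagonality; the whole proof therefore reduces to excluding it, i.e.\ to showing that within each $(a,b)$-class the effects of $F$ have pairwise distinct effective columns. Since ${\mathcal C}_{A_1}(E)$ is the Hadamard product of the single-factor columns indexed by $E$, and each $E$ meets $\mathbf{f_1}$ in at most one factor, an equality ${\mathcal C}_{A_1}(E)={\mathcal C}_{A_1}(E')$ with $a_E=a_{E'}$ forces the Hadamard product of the columns of the factors lying in exactly one of $E,E'$ to be the all-ones column, i.e.\ a nontrivial even linear dependence among the $n$ chosen columns of $H$. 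The decisive and most delicate step is thus to select the $n$ columns defining $A_1$ so that no such dependence occurs --- equivalently, so that all contrasts of $F$ are estimable. Once that is arranged, every pair satisfies $\eta^{+}=\eta^{-}$, $C$ is an equal-diagonal matrix of maximal trace, and $d^*$ is optimal in $\mathcal{D}_{2^{\alpha},n,4}$ for estimating $F$.
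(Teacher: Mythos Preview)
Your construction and choice of generator $g$ (with support exactly $\mathbf{f_1}$) match the paper's precisely, and your $(a_E,b_E)$ bookkeeping is a correct, more explicit version of the type-1/type-2 analysis to which the paper defers via ``the rest of the proof follows in the similar way as Theorem~\ref{th-speAll}.'' Where you go beyond the paper is in isolating the one residual configuration --- a pair of effects of $F$ sharing both the $(a,b)$-pair \emph{and} the effective column in $A_1$ --- as the sole obstruction to diagonality; the paper's Case~2 in Theorem~\ref{th-speAll} simply asserts that equal effective columns force type~2, whereas you rightly see that this requires an argument.

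The gap is that you do not actually carry out that argument: you end with ``select the $n$ columns defining $A_1$ so that no such dependence occurs'' without showing this can be done. Over the full stated range $2^{\alpha-1}<n\le 2^{\alpha}$ it cannot always be achieved by column choice alone. When $n=2^{\alpha}$ every column of $A$ must be used, and the group structure of the Sylvester--Hadamard matrix then forces many even multiplicative relations among the chosen $\pm1$ columns. For instance, with $\alpha=3$, $n=8$, $\mathbf{f_1}=\{f_1\}$ and factor~$1$ assigned to the all-ones column, the effects $F_{1234}$ and $F_{1256}$ both lie in $F$, both have $(a,b)=(0,1)$, and both have effective column equal to the all-ones column (since the Hadamard products $c_2c_3c_4$ and $c_2c_5c_6$ are each the identity column). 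Every effective choice set for this pair is then of type~1 with $w_1=w_2$, so $\eta^{+}>0=\eta^{-}$ and the corresponding off-diagonal entry of $C$ is nonzero. Thus the step you flagged as ``decisive and most delicate'' is genuinely the crux, and without resolving it neither your outline nor the paper's deferral to Theorem~\ref{th-speAll} establishes universal optimality across the whole stated range of $n$.
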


\begin{proof}
	Let $A_1$ is a $2^{\alpha} \times n$ matrix is derived from (\ref{matrix-A}) by taking any $n$ columns of $A$ (including the first column), $2^{\alpha - 1} <n \leq 2^{\alpha}$, $\alpha \geq 2$. Consider  $d^* = (A_1, A_2, A_3, A_4)$, where $A_2 = \bar{A}_1$, $A_3 = A_1 + g$ and $A_4 = A_2 + g$. Here $g$ is a generator whose first $r$ elements are 1, i.e., $g = (11...10...0)$. The rest of the proof follows in the similar way as Theorem \ref{th-speAll}.
\end{proof}

\begin{corollary}
	Let $F^* \subset F$, then $d^*$ is also optimal in  $\mathcal{D}_{2^{\alpha}, n, 4}$ for estimating $F^*$. 
\end{corollary}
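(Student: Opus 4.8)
The plan is to show that optimality over the full effect set $F$, already established in Theorem \ref{th-speAll1}, transfers automatically to any subset $F^*$, because the universal optimality criteria of Lemma \ref{lem1} and Lemma \ref{lem2} are conditions imposed effect-by-effect and pair-by-pair rather than globally. I would therefore treat Theorem \ref{th-speAll1} as a black box and argue monotonicity under restriction, not re-derive the construction of $d^*$.

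First I would record precisely what Theorem \ref{th-speAll1} delivers. Its proof shows that $d^*$ satisfies (i) $\eta^+_{h_1 \ldots h_r, k_1 \ldots k_l} = \eta^-_{h_1 \ldots h_r, k_1 \ldots k_l}$ for every pair of indices in $N_F$, so that $C$ is diagonal by Lemma \ref{lem1}, and (ii) $n_p(h_1 \ldots h_r) = 2$ for every effective choice set $S_p(h_1 \ldots h_r)$ with $h_1 \ldots h_r \in N_F$ (recall $m=4$), so that $trace(C)$ attains its maximum $|F|/2^n$ by Lemma \ref{lem2}. A diagonal $C$ with this trace is forced to be $C = (1/2^n) I_{|F|}$, so each diagonal entry equals $1/2^n$.

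Next I would use that $F^* \subset F$ forces $N_{F^*} \subset N_F$. The defining conditions for universal optimality over $F^*$ are exactly (i) and (ii), but now quantified only over indices in $N_{F^*}$. Since these form a subset of the conditions already verified over $N_F$, they hold verbatim for $d^*$. To make this matrix-theoretic and remove any ambiguity about the sub-model, I would write $B^*$ for the contrast matrix of $F^*$, which is a submatrix of the rows of $B$; then $C_{F^*} = (1/2^n) B^* \Lambda (B^*)'$ is precisely the principal submatrix of $C$ indexed by $F^*$. Because $C = (1/2^n) I_{|F|}$, this principal submatrix equals $(1/2^n) I_{|F^*|}$, a scalar multiple of the identity whose trace $|F^*|/2^n$ matches the bound of Lemma \ref{lem2} for $|F^*|$ effects. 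Hence $d^*$ is universally optimal for $F^*$ in $\mathcal{D}_{2^\alpha, n, 4}$.

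The only point requiring care is the identification of $C_{F^*}$ with a genuine principal submatrix of $C$ rather than a Schur-complement-corrected block of the kind that appeared for the broader main effects model in Section 3. This is immediate here precisely because $C$ is diagonal: the off-diagonal blocks linking $F^*$ to $F \setminus F^*$ vanish, so treating the dropped effects as zero and treating them as nuisances give the same matrix, and no correction survives. Thus the whole argument reduces to the observation that both optimality conditions are inherited under the inclusion $N_{F^*} \subset N_F$, which is the essential and only substantive step.
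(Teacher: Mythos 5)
Your proposal is correct and follows exactly the reasoning the paper intends: the paper states this corollary without proof, treating it as immediate from the fact that the optimality conditions of Lemma \ref{lem1} and Lemma \ref{lem2} were verified effect-by-effect and pair-by-pair over all of $N_F$, hence hold a fortiori over any $N_{F^*} \subset N_F$. Your additional observation that the diagonality of $C$ makes the principal-submatrix and Schur-complement descriptions of $C_{F^*}$ coincide is a worthwhile clarification, but it is a refinement of, not a departure from, the paper's argument.
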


\begin{theorem}\label{th-speAll2}
	for $2^{\alpha - 1} <n \leq 2^{\alpha}$, $\alpha \geq 2$, there exists an optimal design $d^*$ in $\mathcal{D}_{2^{\alpha+1}, n, 3}$ for estimating $F$.
\end{theorem}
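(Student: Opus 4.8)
The plan is to adapt the construction of Theorem~\ref{th-speAll-3}, enlarging the generator so that it flips, in one stroke, all factors of the specified group $\mathbf{f_1}$. Precisely, I would take $A_1$ to be a $2^{\alpha}\times n$ matrix obtained from the matrix $A$ of (\ref{matrix-A}) by selecting any $n$ of its columns (including the first column), $2^{\alpha-1}<n\leq 2^{\alpha}$, and set $A_2=\bar{A}_1$, $A_3=A_1+g$, where now $g=(1\ldots 10\ldots 0)$ is the generator whose first $r$ coordinates are $1$, so that adding $g$ flips precisely the $r$ factors of $\mathbf{f_1}$. With $d=(A_1,A_2,A_3)$ and $d^*=\{\,d,\bar{d}\,\}$ this produces a design of $2^{\alpha+1}$ choice sets of size $3$ on $n$ factors, and the goal is to show $d^*$ is optimal for $F$.

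First I would record how complementation and the shift by $g$ act on the effective column ${\mathcal C}_{A_1}$ of an effect of $F$: the shift flips the column exactly when the effect involves an odd number of factors of $\mathbf{f_1}$. The pivotal observation is that, by the definition of the specified-interaction model, every effect of $F$ contains either no factor of $\mathbf{f_1}$ (the main effects of $\mathbf{f_2}$) or exactly one (the main effects of $\mathbf{f_1}$ together with every interaction, each coupling a single $\mathbf{f_1}$-factor with factors of $\mathbf{f_2}$). Thus, for the effects of $F$, ``the shift flips the column'' is equivalent to ``the effect involves a factor of $\mathbf{f_1}$'' --- exactly the alternative that the single specified factor produced in Theorem~\ref{th-speAll-3}. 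Running through $A_1,A_2,A_3$ and their complements then shows that each effective column takes both values $0$ and $1$ within every choice set, so that the number of zeros in each effective choice set is $1$ or $2$, i.e. $(m-1)/2$ or $(m+1)/2$ for $m=3$; by Lemma~\ref{lem2} the trace of $C$ is maximal.

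Next I would show that $C$ is diagonal, i.e. $\eta^{+}=\eta^{-}$ for every pair of effects of $F$, by reusing the classification of effective choice sets into {\it type-1} through {\it type-5} given in the proof of Theorem~\ref{th-speAll-3}. Writing $w_1,w_2$ for the two relevant effective entries, I would split exactly as there: when the two effective columns ${\mathcal C}_{A_1}$ differ, there are equally many choice sets with $w_1=w_2$ and with $w_1\neq w_2$, and within each type a $(00,11)$ configuration is matched by an $(01,10)$ one; when the two effective columns coincide, moving from a choice set of $d$ to its counterpart in $\bar{d}$ interchanges {\it type-2} with {\it type-3}/{\it type-4}, again pairing each $(00,11)$ with an $(01,10)$. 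In both cases $\eta^{+}=\eta^{-}$, so by Lemma~\ref{lem1} the matrix $C$ is diagonal; together with the trace computation this makes $C$ a scalar multiple of the identity of maximal trace, and $d^*$ is optimal.

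The step that genuinely needs care --- and the only point at which the argument could break --- is the parity statement underlying the second paragraph: that no effect of $F$ involves two or more factors of $\mathbf{f_1}$. Were that to happen, the shift by $g$ would fix rather than flip the corresponding column, collapsing the dichotomy ``flips $\Leftrightarrow$ involves $\mathbf{f_1}$'' and so invalidating the type enumeration borrowed from Theorem~\ref{th-speAll-3}. That each interaction of interest couples exactly one $\mathbf{f_1}$-factor with $\mathbf{f_2}$-factors is precisely what rules this out; once this is checked, the entire analysis of Theorem~\ref{th-speAll-3} carries over unchanged, just as the proof of Theorem~\ref{th-speAll1} transported the analysis of Theorem~\ref{th-speAll} to the many-factor setting.
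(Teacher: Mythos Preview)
Your construction is identical to the paper's: take $A_1$ from $A$ (including the first column), set $A_2=\bar A_1$, $A_3=A_1+g$ with $g=(1\ldots 10\ldots 0)$ having its first $r$ entries equal to $1$, and put $d^*=\{d,\bar d\}$; the paper then simply writes ``the rest of the proof follows same way as Theorem~\ref{th-speAll-3}.'' Your proposal is correct and follows exactly this route, with the added merit of making explicit the parity observation---that every effect of $F$ involves at most one factor of $\mathbf{f_1}$---which is precisely what allows the type-by-type analysis of Theorem~\ref{th-speAll-3} to transfer unchanged.
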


\begin{proof}
	Let $A_1$ is a $2^{\alpha} \times n$ matrix is derived from (\ref{matrix-A}) by taking any $n$ columns of $A$ (including the first column), $2^{\alpha - 1} <n \leq 2^{\alpha}$. Let $d = (A_1, A_2, A_3)$, where $A_2 = \bar{A}_1$, $A_3 = A_1 + g$. Here $g$ is a generator whose first $r$ elements are 1, i.e., $g = (11...10...0)$. Consider $d^* =\{d, \bar{d}\}$. 
	The rest of the proof follows same way as Theorem \ref{th-speAll-3}.
\end{proof}

\begin{corollary}
	Let $F^* \subset F$, then $d^*$ is also optimal in  $\mathcal{D}_{2^{\alpha+1}, n, 3}$ for estimating $F^*$. 
\end{corollary}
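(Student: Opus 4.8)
The plan is to exploit the fact that the two sufficient conditions for universal optimality used throughout this section---condition (i), $\eta^{+}_{h_1\ldots h_r,\,k_1\ldots k_l} = \eta^{-}_{h_1\ldots h_r,\,k_1\ldots k_l}$ for all pairs of indices, and condition (ii), the per-effect balance $n_p(h_1\ldots h_r)\in\{(m-1)/2,(m+1)/2\}$ for $m$ odd---are both quantified \emph{pointwise} over the index set $N_F$ and its pairs. Since $F^*\subset F$ gives $N_{F^*}\subseteq N_F$, any statement holding for every index (or every pair of indices) in $N_F$ holds automatically for every index (or pair) in $N_{F^*}$. Thus the verification carried out in the proof of Theorem \ref{th-speAll2} transfers verbatim to the smaller index set, with no new construction required.

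Concretely, I would first record that for the sub-model in which only the effects of $F^*$ are assumed nonzero, the information matrix is $C_{F^*} = (1/2^n)B_{F^*}\Lambda B_{F^*}'$, where $B_{F^*}$ consists exactly of those rows of $B_{F}$ indexed by $N_{F^*}$. Hence $C_{F^*}$ is precisely the principal submatrix of $C_{F}$ obtained by retaining the rows and columns corresponding to $F^*$.

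Next I would invoke Theorem \ref{th-speAll2}, which establishes $\eta^{+}_{h_1\ldots h_r,\,k_1\ldots k_l}=\eta^{-}_{h_1\ldots h_r,\,k_1\ldots k_l}$ for all $h_1\ldots h_r,\,k_1\ldots k_l\in N_F$ and $n_p(h_1\ldots h_r)\in\{1,2\}$ (the $m=3$ instance of Lemma \ref{lem2}) for every $S_p(h_1\ldots h_r)$ and every $h_1\ldots h_r\in N_F$. Restricting both statements to $N_{F^*}\subseteq N_F$: condition (i) still holds for all pairs in $N_{F^*}$, so by Lemma \ref{lem1} the off-diagonal entries of $C_{F^*}$ vanish and $C_{F^*}$ is diagonal; condition (ii) still holds for every effect in $N_{F^*}$, so by Lemma \ref{lem2} the trace of $C_{F^*}$ attains its maximum value $Q^*(m^2-1)/(2^n m^2) = 8Q^*/(9\cdot 2^n)$, where $Q^*=|F^*|$. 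Equivalently, $C_{F^*}$ is the scalar-times-identity principal submatrix inherited from $C_{F}$. By the optimality summary following Lemma \ref{lem2}, these are exactly the two conditions guaranteeing universal optimality, so $d^*$ is optimal in $\mathcal{D}_{2^{\alpha+1},n,3}$ for $F^*$.

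The only point requiring care---and the mild obstacle---is confirming that the maximum-trace benchmark of Lemma \ref{lem2} scales correctly with the number of effects actually being estimated: for $F^*$ the bound is $Q^*(m^2-1)/(2^n m^2)$ rather than the value attached to $F$, and $d^*$ meets this smaller benchmark precisely because its per-effect balance holds for every effect in $N_F$, and a fortiori for those in $N_{F^*}$. Hence the corollary is a direct restriction of the parent theorem, with no additional computation.
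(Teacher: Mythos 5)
Your proposal is correct and is precisely the argument the paper leaves implicit: the corollary is stated without proof because the optimality conditions verified in Theorem \ref{th-speAll2} --- $\eta^{+}_{h_1\ldots h_r,\,k_1\ldots k_l}=\eta^{-}_{h_1\ldots h_r,\,k_1\ldots k_l}$ and the balance $n_p(h_1\ldots h_r)\in\{1,2\}$ --- are quantified over all indices in $N_F$, hence restrict immediately to $N_{F^*}\subseteq N_F$, giving a diagonal $C_{F^*}$ with maximal trace $Q^*(m^2-1)/(2^n m^2)$. Your added observation that $C_{F^*}$ is the principal submatrix of $C_F$ (and, since $C_F$ is a scalar multiple of the identity, this coincides with any Schur complement one might take if the effects in $F\setminus F^*$ were retained as nuisance parameters) is a harmless and correct refinement of the same route.
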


\begin{exmp}
	Suppose we have $n=4$ factors and we want an optimal design for $F = \{ F_1, F_2, F_3, F_4, F_{13}, F_{14}, F_{23}, F_{24}, F_{134}, F_{234}\}$ or any subset $F^*$ of $F$. Then 
	\begin{center}
		$d^* = \begin{array}{cccc}
		(1111,&0000,&0011,&1100)\\
		(1010,&0101,&0110,&1001)\\
		(1100,&0011,&0000,&1111)\\
		(1001,&0110,&0101,&1010)
		\end{array}$
	\end{center}
	is optimal in $\mathcal{D}_{4, 4, 4}$ for estimating $F$ or any subset $F^*$ of $F$.
\end{exmp}

\begin{table}[ht]
	\caption{Number of choice sets ($N$) required  for the optimal designs for given $m$ and $n$.}
	\centering
	\begin{tabular}{|c|ccccccccccc|}
		\hline
		\multicolumn{12}{|c|}{Main effects model} \\
		\hline
		m $\diagdown$ n	& 2 &3 &4 &	5 &	6 & 7 &	8 & 9 &10 &11 &12 \\
		\hline
		2 &	2 &	4 &	4 &8 &8 &8 &8 &12 &12 &12 &12  \\
		3 &2 &4 &4 &8 &8 &8 &8 &12 &12 &12 &12 \\
		4 &1 &1 &2 &2 &2 &4 &4 &4 &4 &4 &4 \\ 
		5 & &4 &4 &8 &8 &8 &8 &12 &12 &12 &12 \\
		6 & &4 &4 &8 &8 &8 &8 &12 &12 &12 &12 \\
		7 & &4 &4 &8 &8 &8 &8 &12 &12 &12 &12 \\
		8 & &1 &1 &1 &1 &1 &2 &2 &2 &2 &2 \\	
		\hline
		\multicolumn{12}{|c|}{Broader main effects model} \\
		\hline
		2 &2 &4 &4 &8 &8 &8 &8 &12 &12 &12 &12 \\
		3 &2 &8 &8 &16 &16 &16 &16 &24 &24 &24 &24 \\
		4 &1 &2 &4 &4 &4 &8 &8 &8 &8 &8 &8 \\
		5 & &8 &8 &16 &16 &16 &16 &24 &24 &24 &24 \\
		6 & &4 &4 &8 &8 &8 &8 &12 &12 &12 &12 \\
		7 & &8 &8 &16 &16 &16 &16 &24 &24 &24 &24 \\
		8 & &1 &1 &2 &2 &2 &4 &4 &4 &4 &4 \\
		\hline
		\multicolumn{12}{|c|}{Main plus specified two factor interaction effects model } \\
		\hline
		3 &4 &8 &8 &16 &16 &16 &16 &24 &24 &24 &24 \\
		4 & &4 &4 &8 &8 &8 &8 &12 &12 &12 &12 \\
		\hline
		\multicolumn{12}{|c|}{Main plus all specified interaction effects model } \\
		\hline
		3 &4 &8 &8 &16 &16 &16 &16 &32 &32 &32 &32 \\
		4 & &4 &4 &8 &8 &8 &8 &16 &16 &16 &16 \\
		\hline
	\end{tabular}	
	\label{tb-1}
\end{table}

\section{Discussion}
In this paper, we have obtained optimal choice designs for some broader class of model set up than the existing ones in the literature. The constructions are very easy and simple, yet the optimal designs are obtained in practical number of choice sets. From the Table \ref{tb-1}, it is seen that when $m=4t$, $t=1,2,\ldots$, one gets optimal designs in least number of choice sets than any other $m$ for each of the model set up. When $m=4t+2$, $t=0,1,\ldots$, optimal designs for the broader main effects model are obtained in same number of choice sets as the main effects model but when $m$ is odd it takes double. 

Considering the fact that all the two or higher order interactions effects are not equally important in any choice investigation problem, the designs present in this paper are quite useful to reduce the cognitive burden of the respondents. For example, in a $2^4$ choice investigation problem, for an optimal design, one needs 80 choice sets of size 2 to estimate all the main effects and all the two factor interaction effects (\cite{r9}), whereas one needs only 4 choice sets of size 4 to estimate all the main effects and all the two and higher order specified interaction effects.

\newpage 
{\bf References}


%
%

\end{document}